\newcommand{\keysum}{\textsf{keySum}}
\newcommand{\valuesum}{\textsf{valueSum}}
\newcommand{\tagsum}{\textsf{tagSum}}
\renewcommand{\emph}[1]{\textbf{\textit{#1}}}
\def\showryuto{1}
\newcommand{\ryuto}[1]{{\leavevmode\color{red}Ryuto: #1}}
\newcommand{\ryuto}[1]{}
\begin{document}
\title{Dynamic Accountable Storage: An Efficient Protocol for Real-time Cloud Storage Auditing}
\titlerunning{Dynamic Accountable Storage}
%
\author{Michael T.~Goodrich\orcidID{0000-0002-8943-191X} \and
Ryuto Kitagawa\orcidID{0009-0000-7329-9590} \and
Vinesh Sridhar\orcidID{0009-0009-3549-9589}}
\authorrunning{M.T. Goodrich {\it et al.}}
%
\institute{University of California, Irvine CA 92697, USA \\
\email{\{goodrich, ryutok, vineshs1\}@uci.edu}}
\maketitle              
\begin{abstract}
Ateniese, Goodrich, Lekakis, Papamanthou, Paraskevas, and Tamassia introduced
the \emph{Accountable Storage} protocol, which 
is a way for a client to outsource their data to a 
cloud storage provider while allowing
the client to periodically perform accountability challenges.
An accountability challenge
efficiently recovers any pieces of data the server has
lost or corrupted, allowing the client to extract the original copies
of the damaged or lost data objects.
A severe limitation of the prior accountable storage scheme of
Ateniese {\it et al.}, however,
is that it is not fully dynamic.
That is, it does not allow a client to freely insert and delete data from the outsourced data set after initializing the
protocol, giving the protocol limited practical use in the real world. 
In this paper, we present \emph{Dynamic Accountable Storage},
which is an efficient way for a client to periodically audit 
their cloud storage while also supporting insert 
and delete operations on the data set. 
To do so, we introduce a data structure,
the \emph{IBLT tree}, 
which allows either the server or 
the client to reconstruct data the server has lost or corrupted 
in a space-efficient way. 


\keywords{Cloud Storage  \and Data Sketching \and Communication Protocols.}
\end{abstract}

\section{Introduction}

Cloud storage providers often advertise the number of ``nines'' of
durability they achieve, such as the ``11 nines'' in
the durability probability of 99.999999999\% advertised by
Amazon S3 for not losing a given
data object in a given year,\footnote{See, e.g., \url{https://aws.amazon.com/s3/storage-classes/.}} 
which implies an expected loss of at most one
object out of 100 billion per year.
Such statements might seem at first to imply that cloud storage data loss
is impossible, until one considers that there are hundreds of trillions
of objects currently being stored in Amazon S3.\footnote{E.g., see
\url{https://aws.amazon.com/blogs/aws/welcome-to-aws-pi-day-2022/}.}
Moreover, such durability statements do not
address data corruption or data loss caused by misuse 
or misconfiguration, e.g., see~\cite{klotz,mellor,pallardy}.
Such durability statements also beg the question of how to
determine whether
one or more of a client's data objects has been lost or corrupted.
For example, 
when data is lost, a client or server may not even realize it and may
also have no ability to recover from the damage. 

We propose an efficient way for a client and/or server 
to audit and recover client data.
Our work is an extension of the \emph{Accountable Storage} 
protocol first described by Ateniese {\it et al.}~\cite{ateniese2017accountable}. 
In this scheme, a client, Alice, outsources her data to
a cloud storage provider, Bob, and she then stores a
small sketch representation of her data set along with some metadata 
in a data structure called an Invertible
Bloom Lookup Table (IBLT)~\cite{eppstein2010straggler,goodrich_invertible_2011}.
At any time, the client can issue an
\emph{accountability challenge}, requiring that the storage provider, Bob, 
send the client a similar encoding of the data set representing everything that
has not been lost. The client can then compare the two IBLTs to
peel out any data blocks that were lost, assess how much the blocks deviate
from their original versions, and demand compensation accordingly. 
Furthermore, Ateniese {\it et al.}~claim that in their scheme 
the server is forced to acknowledge and 
pay for lost data. However, we show that the server can easily thwart their
scheme.

Another caveat in the protocol by Ateniese {\it et al.}
is that their scheme is static, i.e., 
it does not support insert or delete operations.
This severely limits the usefulness of the scheme in
the real world. Our work fixes this limitation; hence, we
call our scheme ``\emph{Dynamic Accountable Storage}.''
Furthermore, our scheme does not rely on forcing the server to pay for lost
data or store large amounts of metadata, as in the original (static)
Accountable Storage scheme~\cite{ateniese2017accountable}.
Instead, our protocol
provides the server with an efficient low-overhead way 
to recover lost data for the client.

\subsection{Other Related Work}
There are many existing schemes that allow a client
to verify that a cloud storage provider is keeping their data intact, but
not with the same levels of efficiency or dynamism as our scheme.
For example,
Provable Data Possession (PDP) schemes
\cite{ateniese2007provable,ateniese2008scalable,curtmola2008mr,erwayDynamicProvableData2008,yu2016identity,jin2016dynamic}
and Proofs of Retrievability (POR) schemes
\cite{cash_dynamic_2017,juels2007pors,shacham2013compact,shi2013practical,armknecht2014outsourced}
use versions of cryptographic tags to verify that data is maintained
correctly. A common technique used in these schemes is \emph{homomorphic tags}
\cite{ateniese2007provable,bellare1998fast}, which enable batch
verification and reduces communication complexity from linear to
constant in the size of the client's data.  
Verifiable Database schemes
\cite{benabbas2011verifiable,catalano2013vector,chen2014new,chen2015verifiable,miao2017efficient,chen2020publicly}
do something similar, except use cryptographic commitment primitives
rather than tags.
Unfortunately, 
dynamic PDP or POR schemes, such as the one by 
Erway, Kupcu, Papamanthou, and Tamassia~\cite{erwayDynamicProvableData2008},
stop at answering whether the
client's data has been corrupted by a cloud service provider. 
In contrast, our
protocol goes further, by recovering the lost data.

Other work, such as that
by Ateniese, Di~Pietro, Mancini, and Tsudik~\cite{ateniese2007provable}
or Shacham and Waters~\cite{shacham2013compact},
implements a type of accountability challenge with dynamic
data, but their schemes require the client to run $O(n)$ expensive
cryptographic operations per update, where $n$ is the size of the
database. Jin {\it et al.}~\cite{jin2016dynamic} solved this recomputation
problem at the expense of having the client store $O(n)$ extra metadata,
somewhat defeating the purpose of the client outsourcing their data.

\subsection{Our Contributions}
In this paper, we describe a scheme for \emph{Dynamic Accountable Storage},
which supports insertions, deletions, and data recovery such
that both the client and server have low overheads in terms of 
additional storage requirements
and running times required to update metadata.
For example, given a parameter, $\delta$, for the number of blocks
our scheme can efficiently tolerate being lost or corrupted, our
scheme requires only $o(n)$ additional space at the server
rather than the $\Theta(n\delta)$ space required by
the original Accountable Storage scheme~\cite{ateniese2017accountable}.

Another difference of our approach from
prior approaches is that our scheme does not 
assume the server is malicious, since
any reputable cloud service company
has an incentive to help the client maintain her data.
As a result,
our work investigates efficient ways for the server to maintain
the client's data with added reliability and to be able to detect
and repair corruptions on the client's behalf, assuming 
that the server is honest-but-curious; hence, we provide a way
for the client, Alice, to encrypt both her keys and their values while still
allowing her to outsource her data with low overheads for both the client
and the server.
We wish to emphasize that an honest-but-curious server does not preclude the need for accountability mechanisms. 
As described above, any cloud storage system of a sufficient size will inevitably produce errors. 
Even if the server is not maliciously corrupting data, identify errant data objects among billions of others is a non-trivial problem.
In addition to cloud storage verification, schemes like ours have applications in version control systems \cite{erwayDynamicProvableData2008}, verifiably-secure logging \cite{crosby2009efficient,tian2017enabling}, and public data auditing \cite{jin2016dynamic,liu2013authorized,ateniese2017accountable}. 



We have two main contributions in this work. 
The first is to show how the
the server in the original Accountable Storage protocol 
of Ateniese {\it et al.}~\cite{ateniese2017accountable} 
can thwart their original scheme so as to recover lost data without paying
the client or revealing that there has been a data loss.
Indeed, we see this as a feature, not a bug, and we build our scheme on
the assumption that the honest-but-curious 
server, Bob, is motivated to recover the client's data
whenever this is possible.

The second contribution of our work is that we
describe an extension to the Accountable Storage protocol 
\cite{ateniese2017accountable} that can support 
insertions and deletions of key-value pairs. 
To allow the server to efficiently detect and repair corruptions 
in the client's data, we introduce a new data structure 
that is maintained at the server, which we call an \emph{IBLT tree}.
This data structure takes $o(n)$ extra space at the server,
compared to the $O(n)$ extra metadata required by the original Accountable Storage protocol.
Although the server must already store $n$ data blocks for the client,
the space savings across a large number of clients are significant.

The remainder of the paper is organized as follows. In
Section~\ref{sec:old}, we briefly review 
the original Accountable Storage protocol 
of Ateniese {\it et al.}~\cite{ateniese2017accountable},
showing how the server can thwart the requirement to pay for lost data.
In Section~\ref{sec:iblt-tree}, we describe
our new data structure, the IBLT tree. Lastly, in Section~\ref{sec:construction},
we provide an overview of our protocol and a formal construction with an efficiency analysis.

\section{Accountable Storage}
\label{sec:old}
In this section, we review prior work on Accountable Storage
by Ateniese {\it et al.}~\cite{ateniese2017accountable}
and describe a flaw that allows a server to avoid acknowledging or paying
for data loss.
Because their scheme was static, however, and ours is dynamic, we
nevertheless describe their scheme in a way that is compatible with
our dynamic approach.

\subsection{Invertible Bloom Lookup Tables} \label{sec:IBLT-def}
We being by reviewing a sketching data structure known as 
the Invertible Bloom Lookup Table (IBLT)~\cite{goodrich_invertible_2011}, 
which is also called the Invertible Bloom Filter~\cite{eppstein2010straggler}.
This data structure is an 
extension of the classic Bloom filter \cite{bloom1970space}
to be a probabilistic key-value store that can be
used in applications such as set difference computation
\cite{eppstein_whats_2011} and straggler identification
\cite{eppstein2010straggler}.
We describe here a slight variation on the IBLT definition, which
is better suited for our protocols.

An IBLT in this context comprises a table, $\mathbf{T}$, 
of $m$ cells, each of which
contains three fields: a \keysum{} field, a \valuesum{} field,
and a \tagsum{} field.
We also have $q$ hash functions, $h_1,h_2,\ldots,h_q$, which respectively map
keys to $q$ distinct cells in the table $\mathbf{T}$.
Initially, all the fields in all the cells of $\mathbf{T}$ are $0$.
We assume that keys and values have fixed bit lengths, where values
can be thought of as data blocks.
For example, in the original accountable storage scheme
of Ateniese {\it et al.}~\cite{ateniese2017accountable},
the keys were the integers, $1,2,\ldots,n$, used to index blocks in
a large file, which were the associated values.
In addition, we have a \emph{tag function}, $\rho$, which is a pseudo-random
hash function that maps key-value pairs to bit strings of sufficient length
that collisions occur with negligible probability.
For example, the accountable storage scheme
of Ateniese {\it et al.}~\cite{ateniese2017accountable}
used the tag function,
\[
\rho(k,v) = \left( h(k) g^v\right)^d \bmod N,
\]
where $h$ is a collision-resistant hash function, $N=pq$ is an RSA
modulus of sufficient bit length, $g$ is a generator of $\mathbf{QR}_N$,
and
$d$ is a secret RSA key (with public-key RSA exponent, $e$) for the client,
Alice, with respect to $N$.
Because the XOR operation, $\oplus$,
is self-cancelling, to insert or delete a key-value pair, $(k,v)$,
with tag, $T_k=\rho(k,v)$,
in an IBLT, $\mathbf{T}$, we use the same algorithm:
\begin{algorithmic}[100]
\STATE ------------------------
\STATE \textsf{Update}$(\mathbf{T},(k,v),T_k)$:
\FOR {$i\leftarrow 1,2,\ldots, q$}
  \STATE $\mathbf{T}[h_i(k)].\keysum{} \leftarrow \mathbf{T}[h_i(k)].\keysum{} \oplus k$
  \STATE $\mathbf{T}[h_i(k)].\valuesum{} \leftarrow \mathbf{T}[h_i(k)].\valuesum{} \oplus v$
  \STATE $\mathbf{T}[h_i(k)].\tagsum{} \leftarrow \mathbf{T}[h_i(k)].\tagsum{} \oplus T_k$
\ENDFOR
\STATE ------------------------
\end{algorithmic}

We can also combine two IBLTs
(which are of the same size and were built using the same hash functions) 
by XORing the respective cells of the two IBLTs,
according to
the following algorithm:
\begin{algorithmic}[100]
\STATE ------------------------
\STATE \textsf{Combine}$(\mathbf{T}_1,\mathbf{T}_2)$:
\FOR {$i\leftarrow 1,2,\ldots, m$}
 \STATE $\mathbf{T}[i].\keysum{} \leftarrow \mathbf{T}_1[i].\keysum{} \oplus \mathbf{T}_2[i].\keysum{}$
 \STATE $\mathbf{T}[i].\valuesum{} \leftarrow \mathbf{T}_1[i].\valuesum{} \oplus \mathbf{T}_2[i].\valuesum{}$
 \STATE $\mathbf{T}[i].\tagsum{} \leftarrow \mathbf{T}_1[i].\tagsum{} \oplus \mathbf{T}_2[i].\tagsum{}$
\ENDFOR
\STATE {\bf return} $\mathbf{T}$
\STATE ------------------------
\end{algorithmic}
This has the effect of
cancelling out the key-value pairs common to the two IBLTs,
while also unioning the key-value pairs that are not in 
the common intersection.
That is, the \textbf{Combine} function computes the unsigned symmetric
difference of the two sets of key-value pairs; hence, if the two sets
of key-value pairs are the same, then the result represents the empty set 
and if the two sets of key-value pairs are disjoint, then the result
is the union of the two sets.




We say that a cell, $i$, in an IBLT, $\mathbf{T}$, is \emph{pure},
if 
\[
\rho(\mathbf{T}[i].\keysum{},\mathbf{T}[i].\valuesum{}) = \mathbf{T}[i].\tagsum{}.
\]
With very high probability, a pure cell should hold just a single 
key-value pair.
Note that by security assumptions, for the particular tag function, $\rho$,
used in Accountable Storage, only the client, Alice, can compute $\rho$
directly.
Nevertheless, note that a server can test whether an IBLT cell, $i$, is pure
by testing whether the following equality holds:
\[
(\mathbf{T}[i].\tagsum)^e \bmod N = \left( h(\mathbf{T}[i].\keysum) g^{\mathbf{T}[i].\valuesum}\right) \bmod N,
\]
where $e$ is the public exponent corresponding to Alice's RSA private key, $d$.

Thus,
if an IBLT, $\mathbf{T}$, is not too full (which we make precise below), 
either the client or the server
can list out the key-value pairs that $\mathbf{T}$
holds by the following \emph{peeling} process:
\begin{algorithmic}[100]
\STATE ------------------------
\STATE \textsf{Peel}$(\mathbf{T})$:
\WHILE {there is a pure cell, $i$, in $\mathbf{T}$}
 \STATE {\bf output} $((\mathbf{T}[i].\keysum,\mathbf{T}[i].\valuesum),
                      \mathbf{T}[i].\tagsum)$
 \STATE \textsf{Update}$(\mathbf{T},(\mathbf{T}[i].\keysum,\mathbf{T}[i].\valuesum),\mathbf{T}[i].\tagsum)$ ~~// delete $(k,v)$
\ENDWHILE
\IF {$\mathbf{T}$ is not empty}
\STATE {\bf output failure}
\ENDIF
\STATE ------------------------
\end{algorithmic}

\begin{lemma}[\hspace*{-4pt}\cite{eppstein_whats_2011,goodrich_invertible_2011}]\label{lem:IBLT-peel}
Let $\mathbf{T}$ be an IBLT constructed using $q$ hash functions that
stores a set of size $\delta$ using at least $(q+1)\delta$ cells and $q$
hash functions, then
\textsf{Peel}($\textbf{T}$) succeeds with probability $1-O(\delta^{-q})$.
\end{lemma}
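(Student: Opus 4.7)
The plan is to reduce IBLT peeling to a standard random-hypergraph peeling question and then bound the failure probability by a union bound over \emph{stopping sets}.

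First I would model the situation as a random $q$-uniform hypergraph $H$ on $m \ge (q+1)\delta$ vertices (the cells of $\mathbf{T}$) with $\delta$ hyperedges, one per stored key, whose endpoints are the $q$ cells that key hashes into. Under the usual idealization that $h_1,\ldots,h_q$ are independent uniform hash functions, these hyperedges are uniformly random $q$-subsets of cells. The critical observation is that, except on a negligible-probability event for collisions of the tag function $\rho$, a cell is pure exactly when its current degree in the unpeeled hypergraph equals $1$. Consequently, \textsf{Peel} succeeds if and only if iterative removal of degree-$1$ vertices and their incident edges eliminates all of $H$, i.e., if and only if the $2$-core of $H$ is empty.

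Next I would bound the probability that the $2$-core is nonempty by a union bound over the possible sizes $s \in \{2,3,\ldots,\delta\}$ of a minimal stopping set, i.e., a set of $s$ edges in which every incident vertex is covered at least twice. For a fixed collection of $s$ edges, the probability that they form such a configuration is at most the probability that $sq$ random cell-picks all land inside some set of at most $\lfloor sq/2 \rfloor$ cells; a standard counting argument bounds this by roughly
\[
\left(\frac{sq}{2m}\right)^{sq/2}\binom{m}{\lfloor sq/2 \rfloor} \Big/ \binom{m}{q}^{s}.
\]
Multiplying by $\binom{\delta}{s}$ and substituting $m \ge (q+1)\delta$, each term decays geometrically in $s$; the leading $s=2$ contribution yields the claimed $O(\delta^{-q})$ bound, and summing the tail preserves the exponent.

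The main obstacle I anticipate is the bookkeeping needed to recover exactly the exponent $-q$ rather than a weaker one: the stopping-set probability has to be estimated tightly enough to account for ordered-versus-unordered edges and for the fact that in a minimal stopping set every covered vertex has degree exactly $2$. I would finish by noting that the tag-function collision probability for $\rho$ is negligible by construction (tags are long pseudo-random strings), so it is absorbed into the $O(\delta^{-q})$ term. The hypergraph reduction and the stopping-set counting are already worked out in \cite{eppstein_whats_2011,goodrich_invertible_2011}, which I would cite for the detailed combinatorial estimates.
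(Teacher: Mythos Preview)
The paper does not supply its own proof of this lemma; it is quoted directly from \cite{eppstein_whats_2011,goodrich_invertible_2011} and used as a black box. Your sketch---modeling peeling as $2$-core elimination in a random $q$-uniform hypergraph and bounding failure via a union bound over minimal stopping sets---is exactly the argument in those cited works, so there is nothing to compare beyond noting that you have correctly reproduced the standard proof outline.
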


Thus, the peeling process succeeds with high probability.

\subsection{Accountable Storage}\label{sec:old-review}

In the (static) Accountable Storage scheme~\cite{ateniese2017accountable}, 
the client, Alice, computes and retains an IBLT, $\mathbf{T}_{B}$, 
of size $\Theta(\delta)$,
of her $n$ key-value pairs, where $\delta$ is an upper bound for the expected
number of lost or corrupted blocks that could ever occur at the server, Bob.
In their scheme, Alice numbers her blocks, $1,2,\ldots,n$,
and uses these indices as the keys for the blocks.
In our scheme, however, we support Alice performing ${\sf Put}(k,v,t)$ operations,
which add a new block, $v$, with key, $k$, and tag, $t$,
to Bob's storage, and ${\sf Get}(k)$
operations to retrieve the block and tag associated with the key, $k$.
We also allow Alice to issue a ${\sf Delete}(k,v,t)$ operation, to remove
a block, $v$, with key, $k$, and tag, $t$.
Further, because we view Bob as honest-but-curious,
we allow Alice to encrypt her keys and values with a secret key known only
to her. Bob will therefore not be able to determine the plaintext for
her keys and values without breaking Alice's encryption or using
other means.

Because of its static nature, in the original
Accountable Storage scheme~\cite{ateniese2017accountable}, 
Alice sends all $n$ of her key-value pairs to the server, Bob, 
in a single batch, along with a tag for each such key-value pair.
Bob, in turn, stores the $n$ 
key-value pairs, along with their tags, in his cloud storage and also builds
a complete binary tree, which Ateniese {\it et al.} call a ``segment tree,''
where each internal node stores an IBLT of the key-value pairs stored in
its descendants. 
This segment tree requires $O(\delta n)$ space, which is factor $\delta$
larger than the space needed for Alice's key-value pairs; hence, 
this is a considerable additional burden on Bob, who is likely to
pass this cost on to Alice.
Also, note that the IBLT for the root of this tree 
stores $\mathbf{T}_{\rm all}$, an IBLT containing all blocks.

The client, Alice, can issue 
an accountability challenge~\cite{ateniese2017accountable}
to Bob, which requires Bob to use the segment tree
to construct an IBLT, $\mathbf{T}_{\rm kept}$, for all the key-value
pairs that are not lost or corrupted at the server.
Note that Bob can determine whether any key-value pair is lost if there
no existing value block for a given key (which in original the static
scheme were the integers, $1,2,\ldots,n$). 
Likewise, Bob can determine whether any key-value pair, $(k,v)$, with
tag, $T_k$, is corrupted by testing if
\[
T_k^e \bmod N \not= \left( h(k) g^v\right) \bmod N.
\]
Thus, Bob can determine the set of key-value pairs that were not lost
or corrupted and he can compute $\mathbf{T}_{\rm kept}$ in 
$O(\delta^2\log n)$ time by combining all the IBLTs for nodes in
the segment tree that are immediate children of nodes on paths to 
nodes for lost or corrupted key-value pairs but are themselves not 
on such a path.
Since all the key-value pairs for these IBLTs are disjoint, by construction
of the segment tree, the result of this collection of calls to 
the \textsf{Combine} method results in 
$\mathbf{T}_{\rm kept}$.
In the original Accountable Storage protocol~\cite{ateniese2017accountable},
Bob then sends $\mathbf{T}_{\rm kept}$ to Alice, who computes
\textsf{Peel}(\textsf{Combine}($\mathbf{T}_{B},\mathbf{T}_{\rm kept}$)),
which represents the set of key-value pairs lost or corrupted by the server.
Then Alice
charges Bob for the degree to which this set of now-recovered data 
blocks differs from the corresponding set of key-value pairs in Bob's storage.

\subsection{The Server's Simple Way to Thwart Accountable Storage}
Recall that in an accountability challenge, 
the server, Bob, knows which key-value pairs were lost or corrupted.
But in 
the original Accountable Storage protocol~\cite{ateniese2017accountable},
Bob also knows 
$\mathbf{T}_{\rm all}$, since it is stored in the root of the
segment tree.
Thus, Bob can compute
\textsf{Peel}(\textsf{Combine}($\mathbf{T}_{\rm all},\mathbf{T}_{\rm kept}$)),
and restore all the lost or corrupted key-value pairs without ever paying
Alice.
Moreover, he can do this recovery any time he discovers that a key-value
pair is missing or corrupted. Thus, the server can completely thwart
the 
original Accountable Storage protocol
of Ateniese {\it et al.}~\cite{ateniese2017accountable}.
For this reason, rather than view the server as malicious, which is not
realistic in the real world in any case, we view the server as honest-but-curious
and also interested in recovering a client's data any time it is discovered
to be lost or corrupted.


\newcommand{\rank}{\textsf{rank}_S}
\newcommand{\tree}[1]{\mathcal{T}_{#1}}

\section{IBLT Tree}
\label{sec:iblt-tree}

Note that the server can check if a key-value-tag triple, $(k,v,t)$, 
in its storage is corrupted by checking if 
$
t^e \bmod N = \left( h(k) g^{v}\right) \bmod N
$,
where $e$ is the public exponent corresponding to Alice's RSA private key, $d$.
Thus,
the server can na{\"\i}vely construct an IBLT omitting all corrupted blocks in
$O(n)$ time by iterating through all its $(k,v,t)$ triples and
inserting the non-corrupted triples into the IBLT.
Of course, this na{\"\i}ve
approach is ineffecient in terms of time.
Ateniese {\it et al.}~\cite{ateniese2017accountable} addressed this 
drawback in the static
accountable storage setting by using a data structure they called 
a ``segment tree,''
which increased the space burden on the server to be $O(\delta n)$ while
reducing the time burden for constructing an IBLT of the non-corrupted
blocks to be $O(\delta^2 \log n)$.
We do something similar in the dynamic setting by introducing a 
data structure we call the \emph{IBLT tree},
which provides similar functionality but with much
reduced space.
We begin by introducing a general IBLT tree framework that can be applied to any dynamic tree with bucketed leaf nodes. 
We then apply it to
the protected Cartesian tree
of Bender, Farach-Colton, Goodrich, and Koml{\'o}s~\cite{bender} 
to give concrete time bounds.

\subsection{Overview}
Let $S$ be the set of key-value-tag triples the server stores.
The IBLT tree of the set $S$, $\tree{S},$ is defined as a 
binary search tree (ordered by the keys of the key-value-tag triples) where each
of the leaf nodes contains an IBLT with $\Theta(\beta)$ elements of $S$,
with $\beta\ge 1$ being a tunable parameter.
Each internal node, $v$, in this tree contains the IBLT 
resulting from performing a \textsf{Combine} opeations on the IBLT trees
stored at $v$'s children.

For example, consider an IBLT tree with $\beta = 1$.
This structure would enable the server to efficiently construct an IBLT while
excluding the corrupted blocks, but
this tree would require $O(n \delta)$ space
(recall each node contains an IBLT of size $O(\delta)$),
which requires a factor $O(\delta)$ 
more space to store than the data blocks themselves.
Therefore, we use a larger value of $\beta$, 
as illustrated in Figure~\ref{fig:large-bst}.

\definecolor{babyblueeyes}{rgb}{0.63, 0.79, 0.95}
\definecolor{pastelred}{rgb}{1.0, 0.41, 0.38}

\begin{figure}[hbt]
\vspace*{-12pt}
  \centering
  \tikzset{every tree node/.style={minimum width=2em,draw,circle},
         blank/.style={draw=none},
         edge from parent/.style=
         {draw,edge from parent path={(\tikzparentnode) -- (\tikzchildnode)}},
         level distance=1.5cm}
  \resizebox{0.55 \textwidth}{!}{
  \begin{tikzpicture}
    \Tree
    [.{}
      [.{}
        [.{}
          [.{}
          [.\node[fill=pastelred]{};
            [.\node[fill=babyblueeyes]{}; ]
            [.\node[fill=babyblueeyes]{}; ]
            ]
            [.\node[fill=pastelred]{};
            [.\node[fill=babyblueeyes]{}; ]
            [.\node[fill=babyblueeyes]{}; ]
            ]
          ]
          [.{} 
          [.\node[fill=pastelred]{}; 
          [.\node[fill=babyblueeyes]{}; ]
          [.\node[fill=babyblueeyes]{}; ]
            ]
            [.\node[fill=pastelred]{}; 
            [.\node[fill=babyblueeyes]{}; ]
            [.\node[fill=babyblueeyes]{}; ]
            ]
          ]
          ]
          [.\node[fill=pastelred]{}; 
          [.\node[fill=babyblueeyes]{}; 
          [.\node[fill=babyblueeyes]{}; ]
          [.\node[fill=babyblueeyes]{}; ]
            ]
            [.\node[fill=babyblueeyes]{}; 
            [.\node[fill=babyblueeyes]{};
            ]
            ]
          ]
        ]
        [.{} 
        [.\node[fill=pastelred]{};
          [.\node[fill=babyblueeyes]{}; ]
            [.\node[fill=babyblueeyes]{};
            [.\node[fill=babyblueeyes]{}; ]
            ]
          ]
          [.{} 
          [.{} 
          [.\node[fill=pastelred]{};
          [.\node[fill=babyblueeyes]{}; ]
          [.\node[fill=babyblueeyes]{}; ]
              ]
              [.\node[fill=pastelred]{};
              [.\node[fill=babyblueeyes]{}; ]
              [.\node[fill=babyblueeyes]{}; ]
              ]
            ]
            [.{} 
            [.\node[fill=pastelred]{};
            [.\node[fill=babyblueeyes]{}; ]
            [.\node[fill=babyblueeyes]{}; ]
              ]
              [.\node[fill=pastelred]{};
              [.\node[fill=babyblueeyes]{}; ]
              [.\node[fill=babyblueeyes]{}; ]
              ]
            ]
          ]
        ]
      ]
    \end{tikzpicture}
  }
  \caption{
    An IBLT trees for a set of 43 blocks and $\beta=1$, 
    where each node contains an IBLT of all the blocks in its subtree.
    If we set $\beta = 6$ and require each leaf to store at least
    $\beta / 2$ blocks and at most $\beta$ blocks, then 
    the blue nodes would be removed while the red nodes 
    would become the new leaf nodes.
    If $\beta$ increases further, then 
    the tree becomes more compressed.
  }
  \label{fig:large-bst}
\end{figure}
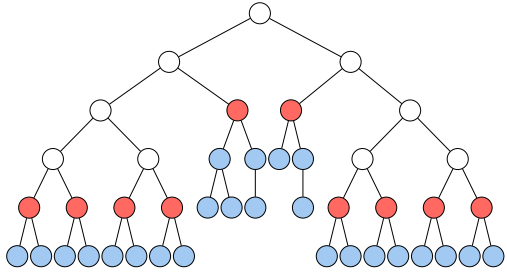

This reduces the space requirement from $O(n \delta)$ to
$O(\delta n / \beta)$,
since the search tree on which the IBLT tree is built
would now contain $O(n / \beta)$ nodes.
Thus, setting $\delta < \beta \leq n$ guarantees sublinear
additional space usage in terms of $n$.

\subsection{IBLT Tree General Framework}
The IBLT tree allows for the following operations:
\textsf{InitIbltTree}$(S),$ \textsf{InsertIbltTree}$(k, v, \tree{S}),$
\textsf{DeleteIbltTree}$(k),$ and
\textsf{ConstructIBLT}$(k^\prime_1, \dots, k^\prime_M, \tree{S}).$
The \textsf{InitIbltTree} operation initializes the IBLT tree with the set $S,$
\textsf{InsertIbltTree} inserts a key-value-tag triple, $(k, v, t)$, 
into the IBLT tree,
\textsf{DeleteIbltTree} deletes the triple with key $k$ from the IBLT tree, and
\textsf{ConstructIBLT} constructs an IBLT of all blocks in the IBLT tree
excluding the blocks with keys in a specified set,
$\{k^\prime_1, \dots, k^\prime_M\}$.
Note that for any dynamic search tree with $\Theta(\beta)$ elements per
leaf node, the IBLT tree functionality can be added with only an $O(\delta)$
factor in space, as previously described, and only a small overhead in time.

\begin{theorem}
  \label{thm:iblt-tree-cost}
  Given a dynamic 
  search tree with $\Theta(\beta)$ elements per leaf node,
  an IBLT tree can be implemented with $O(\delta)$ space overhead
  per node. 
  Given an implementation of the \textsf{Init}, \textsf{Insert}, and
  \textsf{Delete}, operations on the dynamic search tree, we can
  implement an IBLT tree with the following time complexities for each
  operation.
  Let
  $L$ and $I$ be the number of leaf and internal nodes respectively,
  $\Delta_L$  and $\Delta_I$ be the number of leaf and internal nodes
  which are changed by an insertion or deletion operation,
  $H$ be the height of the tree,
  $c$ be the number of children per internal node,
  and
  $M$ be the number of blocks to exclude in \textsf{ConstructIBLT}.
  \begin{enumerate}
    \item The \textsf{InitIbltTree} operation takes 
      $O(T_\textsf{Init} + L\beta + I c \delta)$ time.
    \item The \textsf{InsertIbltTree} and \textsf{DeleteIbltTree} operations 
      take
      $O(T_\textsf{Insert} + \Delta_L \beta + \Delta_I c \delta)$
      and
      $O(T_\textsf{Delete} + \Delta_L \beta + \Delta_I c \delta)$
      time.
    \item The \textsf{ConstructIBLT} operation takes
      $O(H M \delta + \delta \beta)$ time.
  \end{enumerate}
\end{theorem}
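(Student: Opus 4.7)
The plan is to implement each operation with the natural algorithm on top of the underlying dynamic search tree and charge the work node-by-node, invoking Lemma~\ref{lem:IBLT-peel} to fix the IBLT size.

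For the space bound, each node stores exactly one IBLT; by Lemma~\ref{lem:IBLT-peel}, such an IBLT needs only $(q+1)\delta = \Theta(\delta)$ cells, so the per-node overhead is $O(\delta)$ as claimed. For \textsf{InitIbltTree}, I would invoke the underlying \textsf{Init} in $T_\textsf{Init}$ time to construct the empty shape of the tree, then populate each of the $L$ leaf IBLTs by inserting its $\Theta(\beta)$ elements via \textsf{Update} (which touches only $q=O(1)$ cells per insert, costing $O(\beta)$ per leaf and $O(L\beta)$ overall), and finally do a single bottom-up sweep in which every internal node writes its IBLT by applying \textsf{Combine} to the IBLTs of its $c$ children. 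Each combine costs $O(\delta)$, so the sweep costs $O(Ic\delta)$, giving the stated total.

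The insert and delete bounds follow the same template. Call the underlying tree's \textsf{Insert} or \textsf{Delete} at the stated cost, then rebuild each of the $\Delta_L$ changed leaf IBLTs in $O(\beta)$ time (re-inserting the new contents into a freshly zeroed table of $\Theta(\delta)$ cells) and re-combine each of the $\Delta_I$ changed internal-node IBLTs from its $c$ children in $O(c\delta)$ time; every node outside the affected region is untouched, so no further work is needed.

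The most delicate step is \textsf{ConstructIBLT}, where I would use a segment-tree-style decomposition. Mark every node that lies on some root-to-leaf path ending at a leaf that holds one of the $M$ excluded keys; since each path has length at most $H$, the marked set has size $O(HM)$, and the set of unmarked (``clean'') children of marked internal nodes is also $O(HM)$. Their IBLTs, by construction, jointly represent exactly the non-excluded elements that sit in entirely clean subtrees, so combining them into an accumulator costs $O(HM\delta)$. At each of the at most $M \le \delta$ marked leaves, I would explicitly build an IBLT of that leaf's non-excluded elements in $O(\beta)$ insertion time and combine it into the accumulator in $O(\delta)$; summed over affected leaves this contributes the $O(\delta\beta)$ term.

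The main obstacle is the \textsf{ConstructIBLT} analysis: one must verify that the marked/clean sets really have size $O(HM)$ (rather than, e.g., $O(HM\cdot c)$) and that the clean-subtree IBLTs together with the rebuilt leaf IBLTs form a partition of the non-excluded element set, so that a single sequence of \textsf{Combine} calls yields the desired IBLT. Both facts reduce to straightforward counting and to the disjointness of subtrees rooted at the clean children of marked nodes, after which the time bounds are immediate.
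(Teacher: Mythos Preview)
Your proposal is correct and matches the paper's approach: the paper proves the theorem via four lemmas (one per operation) that charge work node-by-node exactly as you do, with each leaf costing $O(\beta)$ to (re)populate and each internal node costing $O(c\delta)$ for a \textsf{Combine} over its children, and its \textsf{ConstructIBLT} argument is the same root-to-leaf path decomposition you call ``segment-tree-style,'' phrased as a top-down recursion that returns a node's stored IBLT whenever its subtree contains no excluded key. The paper's derivation of the $\delta\beta$ term likewise relies implicitly on $M\le\delta$ (it passes from $O(H\delta+\beta)$ per excluded block to $O(HM\delta+\delta\beta)$ total), which you make explicit, and neither proof tracks the factor of $c$ in the \textsf{ConstructIBLT} bound---both effectively treat $c$ as a constant there.
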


The details for the proof of Theorem \ref{thm:iblt-tree-cost} are left in
Appendix \ref{sec:iblt-tree-construction} in the interest of space.

\subsection{Protected Cartesian IBLT Tree}\label{sec:protected-cartesian}
The protected Cartesian tree by Bender, Farach-Colton, Goodrich and
Kolm{\'o}s~\cite{bender} is an example dynamic search tree with $\Theta(\beta)$
elements per leaf node, which we use to illustrate the IBLT tree framework.

We first review the basics of the protected Cartesian tree.
Let $h: k \rightarrow [0, 1]$ be a random hash function that maps
a key $k$ to a real number between $0$ and $1.$
The cartesian tree is a binary search tree
where the nodes adhere to the
additional constraint of being a heap with respect to $h$.
This means with respect to the IBLT framework, $c = 2$.

The protected Cartesian tree introduces the \emph{protected region} of some set
$S,$ defined as
$P(S) =
\{ (k, v) \in S \mid \rank{}(k) \leq \beta / 2 \}
\cup
\{ (k, v) \in S \mid \rank{}(k) > n - \beta / 2 \}
$.
\footnote{
  $\rank{}(k)$ is a function that returns
  $|\{ (k^\prime, v^\prime) \in S \mid k^\prime < k \}|$.
}
Naturally the unprotected region is defined as $U(S) = S \setminus P(S).$

The protected Cartesian tree is constructed by finding the key-value pair
$(k, v)$ within the protected region $P(S)$ with the smallest $h(k)$ value.
This chosen key-value pair becomes the pivot value for the root of the tree.
$S$ is then split into all key-value pairs less than $k,$
$S_1 = \{ (k^\prime, b^\prime) \in S \mid k^\prime < k \}$
and all pairs greater than $k,$
$S_2 = \{ (k^\prime, b^\prime) \in S \mid k^\prime > k \}$.
The left and right subtrees are then recursively constructed on $S_1$ and $S_2$
respectively.
If the protected region is empty, in other words $|S| \leq \beta,$ then the
tree consists of a single leaf node containing $S.$
Since a pivot may only be chosen from $U(S),$ each leaf must contain at
least $\beta / 2$ elements,
and as stated previously leaves exist only if their size is $\leq \beta.$
Thus, the protected regions enable us to percolate $\Theta(\beta)$-sized groups
of blocks in sorted order down to the leaves.

Since each leaf node contains $\Theta(\beta)$ unique elements, then we know
at most there are $O(n / \beta)$ leaf nodes.
And using a well known bound on the number of internal nodes in a binary tree,
we know there must be $O(n / \beta)$ internal nodes as well.

\begin{lemma}(Bender et al.~\cite{bender})
  \label{lem:protected-cartesian-construction}
  The protected Cartesian tree can be constructed on a set $S$ with size $n$
  in $O(n)$ time.
\end{lemma}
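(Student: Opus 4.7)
The plan is to build $\tree{S}$ top-down following the recursive definition, assuming $S$ has been presorted by key into an array $A[1..n]$ (either because the keys admit a linear-time sort in the intended model, or via an $O(n)$ preprocessing step as assumed by Bender \emph{et al.}). Each recursive invocation then operates on a contiguous sub-range $A[l..r]$ of size $n^\prime = r - l + 1$, and we maintain the invariant that the active sub-range is itself a contiguous block of the sorted array.

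First I would handle the base case $n^\prime \le \beta$ by allocating a single leaf storing $A[l..r]$, at cost $O(\beta)$. Otherwise, the invariant above implies that the local protected region is exactly the outer $\lceil \beta/2 \rceil$ elements on each side of $A[l..r]$, a set of $\beta$ entries obtainable in $O(1)$ time from the endpoints $l$ and $r$. A linear scan over those $\beta$ entries identifies the one minimizing $h(k)$ in $O(\beta)$ time; this becomes the pivot at index $p$, after which we recurse on $A[l..p-1]$ and $A[p+1..r]$, both of which are again contiguous sub-ranges so the invariant is preserved.

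To bound the total work, I would argue that every leaf holds at least $\beta/2$ keys and the leaves partition $S$, so there are at most $2n/\beta$ leaves; since the tree is binary, the number of internal nodes is also $O(n/\beta)$. Because each node, leaf or internal, is built with $O(\beta)$ work, the total construction cost is
\[
O\!\left(\frac{n}{\beta}\cdot \beta\right) \;=\; O(n).
\]

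The main obstacle I foresee is justifying the invariant used above, namely that at every recursive call the sub-range's protected region really coincides with its outer $\beta/2$ prefix and suffix of the sorted order. This follows by induction on recursion depth, since the initial call inherits the property from the presorted array and each pivot split produces two contiguous sub-ranges of $A$, so the local $\rank{}$ function within a sub-range matches position within that sub-range. Once this invariant is in hand, the $O(n)$ bound is immediate from the node-count accounting above.
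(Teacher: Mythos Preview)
The paper does not prove this lemma itself; it is simply attributed to Bender \emph{et al.} So there is no in-paper argument to compare against, and the question reduces to whether your argument stands on its own.

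It does not, because it rests on an inconsistency in the paper's description of the data structure. The paper first says the pivot is the element of the \emph{protected} region $P(S)$ minimizing $h$, and you follow this literally, scanning the outer $\beta$ entries of $A[l..r]$ in $O(\beta)$ time. But two sentences later the paper states ``a pivot may only be chosen from $U(S)$'', and it is this second reading that yields the leaf lower bound $\ge \beta/2$ you invoke when counting nodes. Your proof silently mixes the two readings: if the pivot really came from $P(S)$, a pivot at rank $2$ would leave a one-element left child, so leaves need not have $\ge \beta/2$ elements and your $O(n/\beta)$ node bound collapses. If instead the pivot comes from $U(S)$ (the intended definition, and the one that makes the leaf bound true), then locating it by a scan costs $|U(S_v)| = |S_v| - \beta$ per internal node $v$, and $\sum_v |S_v|$ is $\Theta(n \cdot \text{depth})$, which is $\Theta(n\log(n/\beta))$ even in expectation---not $O(n)$. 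Either way the $O(\beta)$-per-node accounting fails. A genuine $O(n)$ construction needs a different mechanism than the na\"ive top-down scan, for instance an adaptation of the standard left-to-right, stack-based Cartesian-tree build in which each element is pushed and popped $O(1)$ times.
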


From Lemma
\ref{lem:protected-cartesian-construction},
Theorem~\ref{thm:iblt-tree-cost},
and some basic facts about the protected Cartesian tree,
we get the following corollary.

\begin{corollary}\label{cor:InitIbltTree-runtime}
  The \textsf{InitIbltTree} operation on the protected Cartesian tree on set
  $S$ of size $n$ takes $O(n)$ time.
  \footnote{
    Recall that $\beta > \delta,$ meaning the $O(\delta n / \beta)$ term from
    constructing the internal nodes is upper bounded by $O(n).$
  }
\end{corollary}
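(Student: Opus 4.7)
The plan is to apply Theorem~\ref{thm:iblt-tree-cost} directly, substituting the parameter values that are specific to the protected Cartesian tree, and then observe that every resulting term collapses to $O(n)$. By Theorem~\ref{thm:iblt-tree-cost}, the cost of \textsf{InitIbltTree} is $O(T_\textsf{Init} + L\beta + Ic\delta)$, so I would handle the three summands separately.

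First, I would invoke Lemma~\ref{lem:protected-cartesian-construction} to set $T_\textsf{Init} = O(n)$, since the underlying protected Cartesian tree itself can be constructed in linear time. Next, for the $L\beta$ term, I would use the structural fact established just before the corollary: each leaf of the protected Cartesian tree contains $\Theta(\beta)$ key-value pairs, so $L = O(n/\beta)$, yielding $L\beta = O(n)$. A one-line sanity check that the leaves partition $S$ (which follows from the binary-search-tree property and the fact that a pivot from $U(S)$ is removed at each internal split) justifies this bound.

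For the $Ic\delta$ term, the protected Cartesian tree is binary, so $c = 2$, and by the standard fact that a binary tree with $L$ leaves has $O(L)$ internal nodes, we have $I = O(n/\beta)$. Thus $Ic\delta = O(\delta n/\beta)$, and here I would invoke the assumption $\beta > \delta$ recorded in the footnote of the corollary to conclude $O(\delta n/\beta) = O(n)$.

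The argument is essentially bookkeeping rather than a genuine obstacle; the only subtlety is to make sure that the $\beta > \delta$ constraint is explicitly used to kill the $\delta n/\beta$ term, and that the $O(L)$ bound on internal nodes of a binary tree is cited (rather than re-derived). Summing the three $O(n)$ contributions gives the claimed $O(n)$ bound for \textsf{InitIbltTree}.
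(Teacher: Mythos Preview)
Your proposal is correct and matches the paper's approach exactly: the paper derives the corollary directly from Lemma~\ref{lem:protected-cartesian-construction}, Theorem~\ref{thm:iblt-tree-cost}, and the structural facts $L,I=O(n/\beta)$, $c=2$, together with the footnoted assumption $\beta>\delta$ to absorb the $O(\delta n/\beta)$ term. Your write-up simply spells out this bookkeeping in full.
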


\begin{lemma}\label{lem:InsertIbltTree-runtime}
  The time to perform the \textsf{InsertIbltTree} or \textsf{DeleteIbltTree}
  operation on the protected Cartesian tree in expectation is
  $O(\delta \log(n / \beta) / \beta)$.
\end{lemma}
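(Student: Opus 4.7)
The plan is to instantiate Theorem~\ref{thm:iblt-tree-cost} with the protected Cartesian tree as the underlying dynamic search tree, so $c=2$, and then to bound the three resulting summands in expectation. Since each leaf holds between $\beta/2$ and $\beta$ elements, there are $O(n/\beta)$ leaves and $O(n/\beta)$ internal pivots; by the heap property with respect to the random priorities $h(k)$, the tree is distributed as a treap over its pivots and has expected depth $O(\log(n/\beta))$.

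First I would bound the expected leaf-level cost. Because a leaf begins its life with about $\beta/2$ elements and is only structurally rebuilt when it reaches $\beta$ and a new pivot is chosen from its protected region, a leaf-level restructure occurs at most once per $\Theta(\beta)$ insertions into that leaf. Hence $E[\Delta_L] = O(1/\beta)$ and $E[\Delta_L]\cdot\beta = O(1)$. Second, I would bound the expected internal-node cost. For insertions that cause no structural change, the only IBLT work is an $O(1)$-time incremental \textsf{Update} call at each of the $O(\log(n/\beta))$ ancestors on the search path, which I would charge to $T_{\textsf{Insert}}$. When a structural change does occur, the standard random-priority analysis for treaps shows that the rotation chain produced by inserting the new pivot has expected length $O(\log(n/\beta))$. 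Combining with the $O(1/\beta)$ probability that any given insertion triggers a structural change gives $E[\Delta_I] = O(\log(n/\beta)/\beta)$, and thus $E[\Delta_I]\cdot\delta = O(\delta \log(n/\beta)/\beta)$. Summing the three terms yields the claimed bound.

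The main obstacle will be pinning down the $O(1/\beta)$ per-insertion restructuring probability and showing that the BST-level bookkeeping $T_{\textsf{Insert}}$ also amortizes into the target bound rather than dominating it. My approach is to carry a potential that accumulates credit as each leaf fills up and pays it out during a restructure, then verify that no adversarial insertion sequence can force a worse rate. This argument leans on the protected-region invariant of Bender et al.~\cite{bender}: because new pivots may only be chosen from keys whose rank is within $\beta/2$ of either end of the affected subtree, and because $h$ assigns priorities independently and uniformly, the pivots behave as a random treap, legitimizing both the $O(\log(n/\beta))$ expected depth and the $O(\log(n/\beta))$ expected rotation-chain length used throughout the bound.
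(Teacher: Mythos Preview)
Your argument rests on the wrong operational model for the protected Cartesian tree. Insertions in this structure do not perform rotations: as the paper (and Bender et al.) describe, when the inserted key causes the pivot at some node $u$ to change, the \emph{entire subtree} rooted at $u$ is torn down and rebuilt via \textsf{Init}. There is no ``rotation chain'' to analyze, so the treap rotation argument you invoke to get $E[\Delta_I]=O(\log(n/\beta)/\beta)$ simply does not apply. A single insertion can touch $\Theta(n/\beta)$ internal nodes \emph{and} $\Theta(n/\beta)$ leaves simultaneously when the root pivot changes; hence your separate amortization ``a leaf is rebuilt once per $\Theta(\beta)$ insertions into that leaf'' also fails, because leaves are rebuilt whenever an ancestor's pivot moves, not only when they themselves overflow. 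You also have the protected-region convention reversed: pivots are drawn from the \emph{unprotected} region $U(S)$; the protected region (the keys within $\beta/2$ of either end) is exactly the set of keys that may \emph{not} become pivots, and this is what forces each leaf to hold $\Theta(\beta)$ elements.

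The paper's proof is structurally different. Rather than bounding $E[\Delta_L]$ and $E[\Delta_I]$ globally and plugging into Theorem~\ref{thm:iblt-tree-cost}, it walks down the (at most two, by Lemma~\ref{lem:insert-path-iblt-tree}) root-to-leaf paths touched by the insertion and, at each node $u$ with $m$ elements in its subtree, argues that the probability the insertion forces $u$'s pivot to change is $O(1/m)$ while the rebuild cost at $u$ scales with the subtree. Multiplying gives an $O(\delta/\beta)$ expected contribution per node on the path, and summing over the $O(\log(n/\beta))$ path nodes yields the bound. The $O(1/m)$ probability step --- a three-case analysis of whether the new key lands in the protected or unprotected region and which existing key it displaces across the boundary --- is the substantive ingredient your proposal is missing.
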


And finally, the runtime of \textsf{ConstructIBLT} operation falls readily from
Lemma~\ref{lem:construct-iblt} and Theorem~\ref{thm:iblt-tree-cost} (both proved in Appendix~\ref{sec:iblt-tree-construction}),
to give us Corollary~\ref{cor:construct-iblt-protected-cartesian}.

\begin{corollary}
  \label{cor:construct-iblt-protected-cartesian}
  The time to perform the \textsf{ConstructIBLT} operation on the protected
  Cartesian tree in expectation is
  $O(\delta M \log(n / \beta) + \delta \beta)$.
\end{corollary}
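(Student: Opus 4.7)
The plan is to derive Corollary~\ref{cor:construct-iblt-protected-cartesian} by plugging two facts into Theorem~\ref{thm:iblt-tree-cost}(3): namely, that $c=2$ for the protected Cartesian tree (immediate, since it is a binary search tree), and, more importantly, that the expected height satisfies $H=O(\log(n/\beta))$. Theorem~\ref{thm:iblt-tree-cost}(3) gives an \textsf{ConstructIBLT} cost of $O(HM\delta+\delta\beta)$, so substituting the height bound yields the stated $O(\delta M\log(n/\beta)+\delta\beta)$ expression.

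The substantive step is therefore bounding the expected height. My plan is to run a treap-style ancestor argument, adapted to the protected Cartesian tree. A pivot in the protected Cartesian tree is the unique element in the current unprotected region $U(S)$ whose hash value $h(k)$ is smallest; recursion then partitions the key interval in two. The key combinatorial observation is that each leaf stores $\Theta(\beta)$ contiguous (in key-order) elements, so the tree has only $O(n/\beta)$ internal pivots arranged in a heap on $h$-values. To bound the depth of a leaf $\ell$, I would consider the $\Theta(\beta)$-sized leaf buckets along key order and, for each pair of buckets $(b,b')$, ask for the probability that the lowest-hash pivot separating them is an ancestor of $\ell$. By a straightforward modification of Seidel--Aragon's treap analysis, the probability that bucket $b'$ lies at depth $d$ above leaf $\ell$ is $O(1/|b-b'|)$, summing to $O(\log(n/\beta))$ across all $O(n/\beta)$ buckets.

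Once the height bound is in hand, the corollary follows by direct substitution into Theorem~\ref{thm:iblt-tree-cost}(3), using that $c=2$ is constant and absorbing it into the big-$O$.

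The main obstacle is confirming that the ``protected region'' rule does not distort the treap-style probability estimates, since pivots are restricted to the unprotected region rather than chosen freely. However, since the unprotected region only excludes the $\beta/2$ smallest and $\beta/2$ largest keys in each recursive call — which exactly corresponds to the leaf-bucket boundaries used in the ancestor argument — these excluded elements are precisely the ones that never need to serve as pivots anyway. Thus the probability bounds carry over, and I expect this adaptation to be routine provided one sets up the leaf-bucket bookkeeping carefully; indeed it may already be available as a black-box fact in Bender~\emph{et al.}~\cite{bender}, in which case the proof reduces to a one-line citation plus the Theorem~\ref{thm:iblt-tree-cost}(3) substitution.
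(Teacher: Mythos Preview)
Your proposal is correct and matches the paper's approach: the paper states the corollary ``falls readily from Lemma~\ref{lem:construct-iblt} and Theorem~\ref{thm:iblt-tree-cost},'' i.e., it substitutes the expected height $H=O(\log(n/\beta))$ of the protected Cartesian tree (treated as a known fact from Bender \emph{et al.}~\cite{bender}) into the general $O(HM\delta+\delta\beta)$ bound, exactly as you describe. One minor remark: the parameter $c$ does not appear in item~(3) of Theorem~\ref{thm:iblt-tree-cost}, so your mention of $c=2$ is unnecessary here, and your treap-style sketch of the height bound, while plausible, goes beyond what the paper actually supplies---it simply invokes the height as a black-box property.
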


\section{Our Dynamic Accountable Storage Construction}\label{sec:construction}

\subsection{Model}

As described in Section~\ref{sec:old}, the purpose of the original Accountable Storage protocol is to allow the client to determine whether the server has lost or corrupted her data. In addition to allowing the client to add and remove data from her database, we also refined the trust model to incorporate the server as an ally rather than an adversary. Formally, 
we use an \emph{honest-but-curious} model~\cite{honest} for the server---the client trusts the server not to actively tamper with her data, but still encrypts her keys and blocks to keep them secret from the server. 
Thus, rather than have the client incur the communication and time costs of auditing her own data, our protocol allows the server to audit the data on her behalf. We also allow the client to either audit a specific set of values or the entire database if she chooses to do so.  

The result of an audit is either the set of corrupted blocks and their recovered counterparts, or some failure state. As we will see below, a failure state is only likely to occur if the server has corrupted the client's data past a specified threshold. Thus, if audits happen periodically, it is unlikely the client will ever have to issue a complaint to the server for missing data. Below we present the seven functions that constitute Dynamic Accountable Storage.

\begin{definition}[Dynamic $\delta$-AS scheme]\label{def:DAS}
    A dynamic $\delta$-AS scheme has the following polynomial-time functions:
\begin{enumerate}

  \item $(pk, sk, T, state, \mathcal{T}) \leftarrow \mathsf{Setup}(B, K, \delta, 1^\tau, \beta)$. Takes as input the client's set of encrypted blocks and keys $(B, K)$, the parameter $\delta$, the security parameter $\tau$, and the server parameter $\beta$. Returns a public and secret key $(pk, sk)$, a set of tags $T$, the client state, and the server's IBLT tree $\mathcal{T}$. 

\item $(state') \leftarrow \mathsf{Put}(k, b, state)$. Takes the new key-block pair and the client's state. The client inserts the key-block pair into their IBLT and has the server do the same.

\item $\{b, \mathtt{failure}\} \leftarrow \mathsf{Get}(k)$. Takes the key of the block the client requests. The server sends the corresponding block $b$ to the client or they both output $\mathtt{failure}$ if the server cannot retrieve it.

\item $\{state', \mathtt{failure}\} \leftarrow \mathsf{Delete}(k, state)$. Takes the key whose block the client wants to delete and the client's state. The client and server delete this block from their states.

\item $\{B', \mathtt{reject}\} \leftarrow \mathsf{ClientAudit}(K', state)$. Takes a set of keys $K'$ the client wishes to audit and the client's state. Returns the corresponding correct blocks or $\mathtt{reject}$, if the blocks could not be recovered correctly. 

\item $\{B', \mathtt{failure}\} \leftarrow \mathsf{ServerAudit}(K', \mathcal{T})$. Takes a set of keys the server wishes to audit and the server's IBLT tree. Returns the corresponding correct blocks or $\mathtt{failure}$, if the blocks could not be recovered correctly. 

\item $\{\mathcal{L}, \mathtt{reject}\}\leftarrow \mathsf{AccountabilityChallenge}(state)$. Takes the client's state. Performs an audit of the entire data base. The client returns a set $\mathcal{L}$ of blocks that the server has corrupted, or $\mathtt{reject}$ if the client could not recover the corrupted blocks. 

\end{enumerate}
\end{definition}

Since the server is an honest-but-curious party in the protocol, 
and we do not consider the risk of data leakage due to the access pattern
for Alice's interactions with her outsourced data,
we have no need for a notion of security beyond having
Alice encrypt her keys and values using a secret key known only to her. 
Below we define our correctness and efficiency goals.

\begin{definition}[Dynamic $\delta$-AS scheme correctness]\label{def:DAS-correctness}

  Let $\mathcal{P}$ be a dynamic $\delta$-AS scheme as defined in Definition~\ref{def:DAS}. $\mathcal{P}$ is correct if 
for all $\tau \in \mathbb{Z}$,
for all $\{pk, sk\}$, tags $T_k$, and states output by $\mathsf{Setup}$,
and for all changes to $state$ made by $\mathsf{Put}$ and $\mathsf{Delete}$, assuming the server has corrupted no more than $\delta$ blocks, the following statements hold:


\begin{enumerate}
    \item For all sets $K^* \subset K$ such that $|K^*| \leq \delta$ input into $\mathsf{ClientAudit}$ and $\mathsf{ServerAudit}$, the probability that the corresponding blocks $B'$ are output is $1 - 1/\text{poly}(\delta)$. 
    \item For all valid $k \in K$, $\mathsf{Get}$ and $\mathsf{Delete}$ succeed with probability $1 - 1/\text{poly}(\delta)$.
    \item For all sets of corrupted blocks $\mathcal{L} \subset B'$ such that $|\mathcal{L}| \leq \delta$, $\mathsf{AccountabilityChallenge}$ outputs $\mathcal{L}$ with probability $1 - 1/\text{poly}(\delta)$.

\end{enumerate}
\end{definition}

\begin{theorem}[Dynamic $\delta$-Accountable Storage]\label{thm:main}
    Let $n$ be the number of blocks. Let the protected Cartesian Tree by Bender {\it et al.} \cite{bender} be the base of the IBLT tree implementation. For all $\delta, \beta \leq n$, there exists a dynamic $\delta$-AS scheme such that
    (1) It is correct according to Definition~\ref{def:DAS-correctness};
    (2) Space usage is $O(\delta)$ at the client and $O(n + \delta n /\beta)$ at the server; 
    (3) $\mathsf{Setup}$ takes $O(n)$ time at both the client and the server; 
    (4) $\mathsf{Put}$ takes $O(1)$ time at the client and $O(\delta\log(n/\beta) / \beta)$ time at the server;
    (5) $\mathsf{Get}$ and $\mathsf{Delete}$ take $O(1)$ time at the client and $O(\delta\log(n/\beta) + \delta\beta)$ time at the server;
    (6) $\mathsf{ClientAudit}$ takes $O(\delta\log n)$ time at the client, $O(\delta^2\log(n/\beta) + \delta\beta)$ time at the server, and has a proof size of $O(\delta)$; 
    (7) $\mathsf{ServerAudit}$ takes no time at the client and $O(\delta^2\log(n/\beta) + \delta\beta + \delta\log n)$ time at the server;
    (8) $\mathsf{AccountabilityChallenge}$ takes $O(\delta\log n)$ time at the client, $O(\delta^2\log(n/\beta) + \delta\beta)$ time at the server, and has a proof size of $O(\delta)$. 
\end{theorem}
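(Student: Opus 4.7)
The plan is to realize the scheme directly on top of the protected Cartesian IBLT tree of Section~\ref{sec:protected-cartesian} and to read each bound off of the results of Section~\ref{sec:iblt-tree}. I will let the client maintain a single IBLT $\mathbf{T}_B$ of $\Theta(\delta)$ cells encoding every current key-value-tag triple via the $\mathsf{Update}$ routine of Section~\ref{sec:IBLT-def}, while the server keeps the raw blocks together with a protected Cartesian IBLT tree $\mathcal{T}$ whose root IBLT plays the role of $\mathbf{T}_{\mathrm{all}}$ from Section~\ref{sec:old-review}. Claim (2) of the theorem will then be immediate: the client uses $O(\delta)$ space, and by Theorem~\ref{thm:iblt-tree-cost} the server uses $O(n)$ for the blocks plus $O(\delta)$ per node summed over the $O(n/\beta)$ nodes of the tree, for $O(n + \delta n/\beta)$ total.

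Next, I will dispatch the operation bounds in order. For $\mathsf{Setup}$, the client builds $\mathbf{T}_B$ in $O(n)$ and the server invokes $\mathsf{InitIbltTree}$ at cost $O(n)$ by Corollary~\ref{cor:InitIbltTree-runtime}. $\mathsf{Put}$ is one $O(1)$ $\mathsf{Update}$ at the client and one $\mathsf{InsertIbltTree}$ at the server, whose expected cost is $O(\delta\log(n/\beta)/\beta)$ by Lemma~\ref{lem:InsertIbltTree-runtime}. For $\mathsf{Get}$ the server first reads $(k,v,t)$ and checks the tag $t^e \equiv h(k)g^v \pmod{N}$; on failure it runs $\mathsf{ConstructIBLT}(\{k\},\mathcal{T})$, combines the result with $\mathbf{T}_{\mathrm{all}}$, and peels out the correct triple, which Corollary~\ref{cor:construct-iblt-protected-cartesian} with $M=1$ bounds by $O(\delta\log(n/\beta) + \delta\beta)$. $\mathsf{Delete}$ performs the same recovery first (when needed) and then $\mathsf{DeleteIbltTree}$, inheriting the bound. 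The audits follow the combine-and-peel paradigm of Section~\ref{sec:old-review}: in $\mathsf{ServerAudit}$ on a set $K'$ with $|K'| \le \delta$, the server computes $\mathbf{T}_{\mathrm{kept}} \leftarrow \mathsf{ConstructIBLT}(K',\mathcal{T})$ in $O(\delta^2\log(n/\beta) + \delta\beta)$ and then peels $\mathsf{Combine}(\mathbf{T}_{\mathrm{all}},\mathbf{T}_{\mathrm{kept}})$ at the extra $O(\delta\log n)$ cost of $O(\delta)$ cell operations on $O(\log n)$-bit words and RSA tag checks; $\mathsf{ClientAudit}$ and $\mathsf{AccountabilityChallenge}$ instead ship $\mathbf{T}_{\mathrm{kept}}$ (size $O(\delta)$, matching the advertised proof size) to the client, who peels $\mathsf{Combine}(\mathbf{T}_B,\mathbf{T}_{\mathrm{kept}})$ in $O(\delta\log n)$.

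Correctness will then follow because every audit case reduces to peeling an IBLT of $\Theta(\delta)$ cells that encodes at most $\delta$ residual triples, so Lemma~\ref{lem:IBLT-peel} yields success probability $1 - O(\delta^{-q}) = 1 - 1/\mathrm{poly}(\delta)$, and clauses (1)--(3) of Definition~\ref{def:DAS-correctness} each instantiate this template. The hard part will be justifying the semantic invariant that makes the combine step sound: the server's raw storage may hold mutated blocks, but $\mathbf{T}_{\mathrm{all}}$ must continue to encode the uncorrupted \textit{original} triples after arbitrary interleavings of $\mathsf{Put}$ and $\mathsf{Delete}$ and after the re-balancing rotations of the protected Cartesian tree. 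I will need to verify that $\mathsf{InsertIbltTree}$ and $\mathsf{DeleteIbltTree}$ correctly XOR their changes into every affected ancestor IBLT so that, at every moment, the root IBLT equals the XOR of the current leaf IBLTs---this is precisely the content of Theorem~\ref{thm:iblt-tree-cost}, whose detailed proof is deferred to the appendix. Once that invariant is in hand, $\mathsf{Combine}(\mathbf{T}_{\mathrm{all}},\mathbf{T}_{\mathrm{kept}})$ and $\mathsf{Combine}(\mathbf{T}_B,\mathbf{T}_{\mathrm{kept}})$ genuinely encode the symmetric difference, i.e.\ the corrupted-or-missing triples, and Lemma~\ref{lem:IBLT-peel} finishes the argument.
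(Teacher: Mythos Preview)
Your proposal is correct and follows essentially the same route as the paper: the paper decomposes Theorem~\ref{thm:main} into Theorem~\ref{thm:correctness} and Lemmas~\ref{lem:space}--\ref{lem:audit}, each of which is proved by invoking exactly the IBLT-tree results (Corollary~\ref{cor:InitIbltTree-runtime}, Lemma~\ref{lem:InsertIbltTree-runtime}, Corollary~\ref{cor:construct-iblt-protected-cartesian}) and the IBLT peeling bound (Lemma~\ref{lem:IBLT-peel}) that you cite, with the same case analysis for $\mathsf{Get}$/$\mathsf{Delete}$ (including the $M=1$ instantiation) and the same $\mathsf{Combine}$-then-$\mathsf{Peel}$ template for the audits. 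Your extra discussion of the root-IBLT invariant under updates and rebalancing is a welcome clarification, but the paper leaves this implicit in Theorem~\ref{thm:iblt-tree-cost} rather than arguing it separately.
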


Now we will provide a formal construction and prove Theorem~\ref{thm:main}.

\subsection{Construction}

Let $N = pq$ be an RSA modulus with $2\tau$ bits. Let $e$ be a random prime and $d$ be a number such that $de = 1 \mod \phi(N)$. Let $\mathcal{H}$ be a set of $q$ hash functions that maps $\{0, 1\}^*$ to $t$, the size of the IBLT. Let $g$ be a generator of $\mathbf{QR}_N$. Let $h$ be a collision-resistant hash function that maps $\{0, 1\}^*$ to $\mathbf{QR}_N$, modeled as a random oracle. The client begins with an encrypted set of keys $K$ and blocks $B$ that she encrypts with a method of her choice. 

\begin{enumerate}

  \item $(pk, sk, T, state, \mathcal{T}) \leftarrow \textsf{Setup}(B, K, \delta, 1^\tau, \beta)$.  Set $pk = (N, e, \mathcal{H}, g, h)$. Set $sk = d$. The client generates an IBLT $\mathbf{T}_B$ of size $O(\delta)$. For each key-block pair $(k, b)$, she computes the tag $T_k = (h(k) g^b)^d \mod N$ and then runs $\textsf{Update}(\mathbf{T}_B, (k, b), T_k)$ on all of her blocks. She sets $state = (K, \mathbf{T}_B)$ and sends the server her keys, blocks, and tags $(K, B, T)$. The server runs \textsf{InitIbltTree} on $(K, B, T, \beta)$ to produce its IBLT tree $\mathcal{T}$ using buckets of size $\beta$.

  \item $(state') \leftarrow \textsf{Put}(k, b, state)$. The client generates a new tag $T_k = (h(k) g^b)^d \mod N$ and runs $\textsf{Update}(\mathbf{T}_B, (k, b), T_k)$ to insert the block into her IBLT. The client then sends $(k, b, T_k)$ to the server, which executes \textsf{InsertIbltTree} on $(k, b, T_k)$ to insert the block into $\mathcal{T}$.

  \item $\{b, \texttt{failure}\} \leftarrow \textsf{Get}(k)$. The client requests the block corresponding to $k$ from the server. If the server has $b$, then it sends it to the client. If not, it attempts to run $\textsf{ServerAudit}({k}, \mathcal{T})$. If that returns $b$, it sends that to the client, otherwise it returns $\texttt{failure}$. 

If the client receives $b$, she outputs $b$. Otherwise, she returns \texttt{failure}.

\item $\{state', \texttt{failure}\} \leftarrow \textsf{Delete}(k, state)$. The client executes $\mathsf{Get}(k)$ to retrieve the block $b$.

  If the server finds $b$ during \textsf{Get}, it executes \textsf{DeleteIbltTree} on $(k, b, T_k)$ to remove the block from $\mathcal{T}$. Otherwise it outputs \texttt{failure}.

If the client outputs $b$ from $\mathsf{Get}$, she computes tag $T_k = (h(k) g^b)^d \mod N$ and runs $\textsf{Update}(\mathbf{T}_B, (k, b), T_k)$, which now deletes $b$ from her IBLT. Otherwise, she returns \texttt{failure}.

\item $\{B', \texttt{reject}\} \leftarrow \textsf{ClientAudit}(K', state)$. The client sends $K'$ to the server, and the server executes the \textsf{ConstructIBLT} function on $\mathcal{T}$ to compute an IBLT $\mathbf{T}_K$ that contains all blocks except the ones that correspond to the keys $K'$. The client receives $\mathbf{T}_K$ and performs $\mathsf{Combine}(\mathbf{T}_B, \mathbf{T}_K)$ to produce the IBLT $\mathbf{T}_L$. She then performs \textsf{Peel} on $\mathbf{T}_L$ to extract $B'$. If \textsf{Peel} fails, she outputs \texttt{reject}. Otherwise she outputs the set of blocks $B'$. This set will contain both the correct blocks that correspond to $K'$ as well as any blocks the server had corrupted.

\item $\{B', \texttt{failure}\} \leftarrow \textsf{ServerAudit}(K', \mathcal{T})$. Similar to $\mathsf{ClientAudit}$ except the server itself takes the IBLT difference $\mathbf{T}_L$ by executing $\mathsf{Combine}(\mathbf{T}_\textsf{all}, \mathbf{T}_K)$, where $\mathbf{T}_K$ is constructed as above and $\mathbf{T}_{\textsf{all}}$ is the IBLT stored at the root of the server's IBLT tree which contains all blocks. The server can perform $\textsf{Peel}$ on $\mathbf{T}_L$ to recover $B'$.

\item $\{\mathcal{L}, \texttt{reject}\}\leftarrow \textsf{AccountabilityChallenge}(state)$. The client requests an accountability challenge to the server, and the server compiles all the blocks $B$ that it believes belong to the client. Executing the \textsf{ConstructIBLT} function of their IBLT tree $\mathcal{T}$, the server constructs an IBLT $\mathbf{T}_K$ from them. The client computes $\mathbf{T}_L$ using $\mathsf{Combine}(\mathbf{T}_B, \mathbf{T}_K)$. She then uses \textsf{Peel} to extract at most $\delta$ blocks $\mathcal{L}$ that remain in $\mathbf{T}_L$. These blocks represent the corrupted blocks held by the server as well as the correct blocks that correspond. If \textsf{Peel} fails, she outputs \texttt{reject}. Otherwise she outputs the set $\mathcal{L}$.

\end{enumerate}

\subsection{Analysis}

Proofs of the following lemmas and theorems are deferred to Appendix~\ref{sec:appendix-proofs}. Together they imply that Theorem~\ref{thm:main} is true.

\begin{theorem}\label{thm:correctness}
    The Dynamic Accountable Storage construction given in Section~\ref{sec:construction} is correct according to Definition~\ref{def:DAS-correctness}.
\end{theorem}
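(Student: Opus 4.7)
The plan is to verify each of the three correctness conditions of Definition~\ref{def:DAS-correctness} by tracing the IBLT manipulations in each operation and reducing their success to the peeling guarantee of Lemma~\ref{lem:IBLT-peel}. The basic observation is that every probabilistic step in the construction ultimately rests on a single \textsf{Peel} call applied to an IBLT whose logical contents form a set of size $O(\delta)$.

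For condition~1 ($\mathsf{ClientAudit}$ and $\mathsf{ServerAudit}$), I would first argue that since the client's $\mathbf{T}_B$ logically encodes all $n$ correct key-value pairs and the server's $\mathbf{T}_K$ (built via \textsf{ConstructIBLT}) encodes the current uncorrupted pairs minus those with keys in $K^*$, the symmetric-difference semantics of \textsf{Combine} guarantee that the combined IBLT $\mathbf{T}_L$ represents exactly $K^* \cup \mathcal{L}$, where $\mathcal{L}$ is the set of corrupted or lost blocks. Because $|K^*|\le\delta$ and $|\mathcal{L}|\le\delta$ by assumption, $\mathbf{T}_L$ has at most $2\delta$ elements; choosing the IBLT cell count as a sufficiently large constant multiple of $\delta$ (at least $(q+1)\cdot 2\delta$) and applying Lemma~\ref{lem:IBLT-peel} then yields \textsf{Peel} success probability $1-O(\delta^{-q})=1-1/\mathrm{poly}(\delta)$, with each peeled cell contributing an element of $B'$.

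For condition~2, I would split into two cases: if the requested block is present and uncorrupted in the server's main storage, $\mathsf{Get}$ returns it deterministically and the subsequent \textsf{Update} inside $\mathsf{Delete}$ is unconditional; otherwise $\mathsf{Get}$ calls $\mathsf{ServerAudit}(\{k\},\mathcal{T})$, whose success probability is already bounded by condition~1. For condition~3, the argument mirrors $\mathsf{ServerAudit}$ with $K^*=\emptyset$, so $\mathbf{T}_L$ encodes only $\mathcal{L}$, a set of size at most $\delta$, and Lemma~\ref{lem:IBLT-peel} again delivers success probability $1-1/\mathrm{poly}(\delta)$.

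The main obstacle I expect is showing that when a cell in $\mathbf{T}_L$ reports as pure during \textsf{Peel} it truly corresponds to a genuine single key-value-tag triple rather than to a spurious coincidence. I would handle this via the cryptographic properties of $\rho(k,v)=(h(k)g^v)^d\bmod N$: under the standard RSA assumption and the random-oracle model for $h$ invoked in Section~\ref{sec:old}, the probability that an arbitrary XOR of two or more triples accidentally satisfies $(\textsf{tagSum})^e \equiv h(\textsf{keySum})\,g^{\textsf{valueSum}}\pmod{N}$ is negligible, so a union bound over the $O(\delta)$ peeling steps ensures every output is a legitimate recovered triple. Combined with the peeling bound from Lemma~\ref{lem:IBLT-peel}, this completes the verification of all three conditions of Definition~\ref{def:DAS-correctness}.
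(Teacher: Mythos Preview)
Your proposal is correct and follows essentially the same approach as the paper: both reduce each correctness condition to the success probability of a single \textsf{Peel} call via Lemma~\ref{lem:IBLT-peel}, after noting that \textsf{Update}, \textsf{Combine}, and the IBLT-tree maintenance are deterministic with respect to the logical contents of the IBLTs. Your treatment is in fact somewhat more careful than the paper's on two points---you explicitly bound $|\mathbf{T}_L|$ by $2\delta$ (accounting for both $K^*$ and the corrupted set $\mathcal{L}$) and you address false-positive pure cells via the RSA-based tag---whereas the paper leaves both of these implicit.
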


\begin{lemma}\label{lem:space}
    Space usage is $O(\delta)$ at the client and $O(n + \delta n /\beta)$ at the server.
\end{lemma}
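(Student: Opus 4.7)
The plan is to account for the client's and server's persistent state separately, with each bound reducing to facts already established in Section~\ref{sec:iblt-tree}.

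First, I would handle the client side. After \textsf{Setup}, the only protocol-level metadata the client maintains is the IBLT $\mathbf{T}_B$. Following the construction, this IBLT is sized $\Theta(\delta)$ cells so that, by Lemma~\ref{lem:IBLT-peel}, \textsf{Peel} succeeds with high probability whenever the residual contains at most $\delta$ key-value pairs. Each cell stores a constant number of fields (\keysum, \valuesum, \tagsum), each of $O(\tau)$ bits, which I treat as $O(1)$ with respect to the size parameter $n$. Hence the client's metadata storage is $O(\delta)$. I would add a brief remark clarifying that the key set $K$ appearing formally in $state = (K, \mathbf{T}_B)$ is application-level information (the identifiers of the client's own data), not protocol-level metadata, and is therefore not counted toward the $O(\delta)$ bound.

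Next I would handle the server, whose storage decomposes naturally into the raw key-block-tag triples $(K, B, T)$ plus the IBLT tree $\mathcal{T}$. The raw triples contribute $O(n)$ space, since there are $n$ of them, each of constant asymptotic size in $n$. For the IBLT tree I would invoke the structural properties of the protected Cartesian tree from Section~\ref{sec:protected-cartesian}: each leaf holds $\Theta(\beta)$ elements, so there are $O(n/\beta)$ leaves, and by the standard bound relating internal and leaf counts in a binary tree there are also $O(n/\beta)$ internal nodes. By Theorem~\ref{thm:iblt-tree-cost}, each such node carries an IBLT of $\Theta(\delta)$ cells, giving a total IBLT-tree overhead of $O(\delta n/\beta)$. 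Summing the two contributions yields $O(n + \delta n/\beta)$, as claimed.

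The entire argument is routine given the preceding sections, so I do not anticipate a substantive obstacle; the only point requiring slight care is the bookkeeping convention around the client-side set $K$, which I would address with the one-line clarification above. No new combinatorial or probabilistic machinery is needed beyond what Section~\ref{sec:iblt-tree} already provides.
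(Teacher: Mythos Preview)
Your proposal is correct and mirrors the paper's own proof almost exactly: both account for the client's $O(\delta)$ IBLT (with the same remark that $K$ is not charged as protocol metadata) and bound the server's storage as $O(n)$ for the raw triples plus $O(n/\beta)$ nodes times $O(\delta)$ per node for the IBLT tree. The only difference is that you add a couple of extra justifying references (to Lemma~\ref{lem:IBLT-peel} and Theorem~\ref{thm:iblt-tree-cost}), which is harmless.
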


\begin{lemma}\label{lem:setup}
$\mathsf{Setup}$ takes $O(n)$ time at both the client and server. 
\end{lemma}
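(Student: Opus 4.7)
The plan is to decompose $\mathsf{Setup}$ into its client-side and server-side components and bound each separately. On the client side, $\mathsf{Setup}$ consists of (a) generating the key material $(pk,sk)$, which depends only on the security parameter $\tau$ and is independent of $n$, (b) allocating an empty IBLT $\mathbf{T}_B$ of $O(\delta)$ cells, and (c) iterating once through the $n$ encrypted key-block pairs, for each computing the tag $T_k=(h(k)g^b)^d \bmod N$ and calling $\textsf{Update}(\mathbf{T}_B,(k,b),T_k)$. First I would argue that the per-pair work is $O(1)$: under the standard convention (implicit in the rest of the theorem) that cryptographic operations on $\tau$-bit values count as unit cost, computing $T_k$ is constant time, and $\textsf{Update}$ performs exactly $q$ XORs into $\mathbf{T}_B$, where $q$ is the fixed constant number of IBLT hash functions. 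Summing over the $n$ pairs yields $O(n)$ client work, plus $O(n)$ time to transmit $(K,B,T)$ to the server.

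On the server side, the whole of $\mathsf{Setup}$ reduces to the single call $\textsf{InitIbltTree}(K,B,T,\beta)$ that builds $\mathcal{T}$. Since the protected Cartesian tree is used as the underlying dynamic search tree, I would invoke Corollary~\ref{cor:InitIbltTree-runtime} directly, which gives an $O(n)$ bound for this initialization. Combining the two sides yields the claimed $O(n)$ bound at both parties.

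The main obstacle is really just bookkeeping rather than a substantive argument: one has to be explicit that $\textsf{Update}$ costs $O(q)=O(1)$ (a reader might initially expect something depending on $\delta$), and that tag generation is $O(1)$ under the unit-cost cryptographic convention used elsewhere in Theorem~\ref{thm:main}. Once these conventions are fixed, both the client-side summation and the server-side appeal to Corollary~\ref{cor:InitIbltTree-runtime} are immediate, so no further calculation is needed.
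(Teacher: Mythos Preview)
Your proposal is correct and follows essentially the same approach as the paper: bound the client side by observing that $n$ tag computations and $n$ calls to \textsf{Update} each cost $O(1)$, and bound the server side by invoking Corollary~\ref{cor:InitIbltTree-runtime} for \textsf{InitIbltTree}. The paper's proof is terser, but the decomposition and the cited result are identical.
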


\begin{lemma}\label{lem:insert-remove}
    $\mathsf{Put}$, $\mathsf{Get}$, and $\mathsf{Delete}$ take $O(1)$ time at the client. $\mathsf{Put}$ takes $O(\delta\log(n/\beta) / \beta)$ time at the server. $\mathsf{Get}$ and $\mathsf{Delete}$ take $O(\delta\log(n/\beta) + \delta\beta)$ time at the server.
\end{lemma}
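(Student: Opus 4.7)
The plan is to treat each of $\mathsf{Put}$, $\mathsf{Get}$, and $\mathsf{Delete}$ separately and decompose the work into client-side and server-side pieces, then cite the IBLT-tree bounds already established in Section~\ref{sec:protected-cartesian}.

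For the client side of all three operations, the work consists of a constant number of primitive actions. Each tag computation $(h(k) g^b)^d \bmod N$ is $O(1)$ under the RAM-model convention used implicitly in the IBLT analysis of Lemma~\ref{lem:IBLT-peel} for a fixed security parameter $\tau$, and each call to \textsf{Update} on $\mathbf{T}_B$ touches only the $q = O(1)$ cells indexed by $h_1, \ldots, h_q$. In particular, the client's part of $\mathsf{Get}$ is just a request-and-receive of a single message; $\mathsf{Put}$ does one tag computation plus one \textsf{Update}; and $\mathsf{Delete}$ invokes $\mathsf{Get}$ and then performs a single \textsf{Update} to cancel $(k,b)$ out of $\mathbf{T}_B$. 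This establishes the $O(1)$ client-side bounds.

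The server side of $\mathsf{Put}$ is a direct application of Lemma~\ref{lem:InsertIbltTree-runtime}, which yields expected cost $O(\delta \log(n/\beta)/\beta)$ for \textsf{InsertIbltTree} on the protected Cartesian IBLT tree. For $\mathsf{Get}$, I would split into two cases. In the easy case the server locates the uncorrupted triple $(k, b, T_k)$ via a standard search in the protected Cartesian tree in $O(\log(n/\beta) + \beta)$ expected time. In the bad case the direct lookup fails and the server falls back on $\mathsf{ServerAudit}(\{k\}, \mathcal{T})$; by Corollary~\ref{cor:construct-iblt-protected-cartesian} with $M = 1$, \textsf{ConstructIBLT} produces $\mathbf{T}_K$ in $O(\delta \log(n/\beta) + \delta \beta)$ expected time, and the subsequent \textsf{Combine} with the root IBLT $\mathbf{T}_{\textsf{all}}$ and a \textsf{Peel} of the resulting $O(\delta)$-size IBLT each cost $O(\delta)$ and are absorbed. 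The bad case dominates, so $\mathsf{Get}$ costs $O(\delta \log(n/\beta) + \delta \beta)$ at the server. Then $\mathsf{Delete}$ costs the $\mathsf{Get}$ bound plus one \textsf{DeleteIbltTree} call of expected cost $O(\delta \log(n/\beta)/\beta)$ from Lemma~\ref{lem:InsertIbltTree-runtime}, and the former dominates.

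The main subtlety I expect to handle carefully is the quantifier bookkeeping: the server-side bound for $\mathsf{Get}$ (and hence for $\mathsf{Delete}$) must be stated as a worst case over whether the \textsf{ServerAudit} fallback fires, while simultaneously being in expectation over the random hash used by the protected Cartesian tree, matching the style of Lemma~\ref{lem:InsertIbltTree-runtime} and Corollary~\ref{cor:construct-iblt-protected-cartesian}. A secondary point worth stating explicitly is that the modular arithmetic in each tag computation and the $q$ hash evaluations per \textsf{Update} are counted as $O(1)$ under the same RAM-model convention already implicit in the cited IBLT analyses; once these conventions are fixed, every cost folds into the stated bounds with no further combinatorial estimate.
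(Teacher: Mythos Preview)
Your proposal is correct and mirrors the paper's own proof almost exactly: both split each operation into client and server work, dispatch the client side as a constant number of tag computations and \textsf{Update} calls, invoke Lemma~\ref{lem:InsertIbltTree-runtime} for the server side of $\mathsf{Put}$ and $\mathsf{Delete}$, and handle $\mathsf{Get}$ by the easy/bad-case split with Corollary~\ref{cor:construct-iblt-protected-cartesian} at $M=1$ governing the worst case. Your extra remarks on the expectation-versus-worst-case quantifier and the RAM-model convention for tag arithmetic are, if anything, more careful than the paper's version.
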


\begin{lemma}\label{lem:audit}
    $\mathsf{ClientAudit}$ and $\mathsf{AccountabilityChallenge}$ take $O(\delta \log n)$ time at the client and $O(\delta^2\log(n/\beta) + \delta\beta)$ time at the server with a proof size of $O(\delta)$. $\mathsf{ServerAudit}$ takes no time at the client and $O(\delta^2\log(n/\beta) + \delta\beta + \delta\log n)$ time at the server.
\end{lemma}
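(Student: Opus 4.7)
The plan is to decompose each audit operation into its already-analyzed subroutines and sum the costs. For \textsf{ClientAudit}$(K', state)$, the server runs \textsf{ConstructIBLT} while excluding the $M = |K'| \le \delta$ keys of $K'$, transmits the resulting $\Theta(\delta)$-sized IBLT $\mathbf{T}_K$ to the client, and the client performs \textsf{Combine}$(\mathbf{T}_B,\mathbf{T}_K)$ followed by \textsf{Peel}. By Corollary~\ref{cor:construct-iblt-protected-cartesian} applied with $M = O(\delta)$, the server's \textsf{ConstructIBLT} takes $O(\delta^2 \log(n/\beta) + \delta\beta)$, and since $\mathbf{T}_K$ is a single IBLT with $\Theta(\delta)$ cells, the proof size is $O(\delta)$. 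On the client side, \textsf{Combine} is an $O(\delta)$ pass over the two IBLTs' cells.

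The client-side $O(\delta \log n)$ bound comes from \textsf{Peel}: I would argue that each of the $O(\delta)$ peeling iterations does a constant number of purity tests against $\rho$ and a constant number of \textsf{Update} calls, each touching $q = O(1)$ cells and XOR-ing keys drawn from a universe of size $\Theta(n)$, i.e., keys of $\Theta(\log n)$ bits; the per-iteration tag verification via the client's secret exponent $d$ over a fixed-size RSA modulus is absorbed into this word-level cost, giving $O(\log n)$ per peel step and $O(\delta \log n)$ in total. This yields the stated client bound.

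\textsf{AccountabilityChallenge}$(state)$ follows the same accounting with the at most $\delta$ corrupted blocks playing the role of the excluded set in \textsf{ConstructIBLT}, so the server and client costs and the $O(\delta)$ proof size are identical. For \textsf{ServerAudit}$(K', \mathcal{T})$, the server does all of the work itself: it computes $\mathbf{T}_K$ at cost $O(\delta^2\log(n/\beta) + \delta\beta)$, then performs \textsf{Combine}$(\mathbf{T}_{\mathrm{all}}, \mathbf{T}_K)$ and \textsf{Peel} on the resulting $\Theta(\delta)$-sized IBLT (using the public-exponent purity test $(\mathbf{T}[i].\tagsum{})^e \bmod N = h(\mathbf{T}[i].\keysum{})g^{\mathbf{T}[i].\valuesum{}}\bmod N$). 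The local peel contributes the extra $O(\delta \log n)$ term in the server's bound, and no client-side work or communication is incurred.

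The main obstacle I anticipate is making the $O(\log n)$ per-peel-step factor rigorous, since it is not stated inline in Lemma~\ref{lem:IBLT-peel} or Corollary~\ref{cor:construct-iblt-protected-cartesian}; I would address it by fixing an explicit word/bit-cost model for key manipulation (keys from a universe of size $\Theta(n)$, hence $\Theta(\log n)$-bit XORs and hash evaluations) and noting that every other per-iteration operation --- finding a pure cell via a maintained candidate list, the $O(1)$-many IBLT cell touches, and the constant-size modular exponentiation over the fixed-parameter RSA modulus --- is dominated by this $\Theta(\log n)$ key manipulation. All other summands in Lemma~\ref{lem:audit} follow directly by substitution into the previously established bounds.
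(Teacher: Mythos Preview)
Your decomposition and accounting match the paper's proof exactly: the server bound comes from Corollary~\ref{cor:construct-iblt-protected-cartesian} with $M=\delta$, the proof size from the IBLT having $\Theta(\delta)$ cells, the client's \textsf{Combine} is an $O(\delta)$ scan, \textsf{Peel} contributes the $O(\delta\log n)$ term, and \textsf{ServerAudit} is the sum of the two roles.

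The one place you diverge is the justification of the $O(\delta\log n)$ peel bound. The paper does not argue it from a bit-cost model; it simply cites Eppstein \textit{et al.}~\cite{eppstein_whats_2011} for that running time. Your word-level argument based on $\Theta(\log n)$-bit keys is shakier here than you may realize: in this dynamic scheme the keys are arbitrary (encrypted) client keys, not the indices $1,\dots,n$ of the static scheme, so there is no a~priori reason they are $\Theta(\log n)$ bits, and the per-step modular exponentiation over a $2\tau$-bit RSA modulus is not obviously dominated by a $\log n$-bit XOR. Since you correctly identified this as the weak link, the cleanest fix is to do what the paper does and appeal to the cited IBLT peeling bound directly rather than re-deriving it.
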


Instantiating $\beta = \delta$ gives us a protocol whose space at the server is $O(n)$, as opposed to $O(n + n\delta)$ space at the server in the original Accountable Storage protocol \cite{ateniese2017accountable}. With this instantiation, $\mathsf{Put}$ takes $O(\delta \log(n/\delta))$ time in the IBLT tree versus $O(\delta^2\log n)$ time in the segment tree. In addition, $\mathsf{Get}$ and $\mathsf{Delete}$ take $O(\delta^2\log(n/\delta) + \delta^2)$ time in the IBLT tree versus $O(\delta^2 \log n)$ time in the segment tree. Similar performance improvements happen for the audit functions. Thus, the IBLT performs better in both space and time complexity in this configuration. We can further reduce the size of the IBLT tree by increasing the size of $\beta$. However, this incurs an increase in the runtime for $\mathsf{Get}$, $\mathsf{Delete}$, and the auditing functions. The server owner can choose an appropriate value of $\beta$ in accordance with their specific capabilities.




\section{Discussion}
In this work, we observed that the original Accountable Storage protocol of Ateniese, Goodrich, Lekakis, and Papamanthou \cite{ateniese2017accountable} had limited practical use due to being a static protocol and additionally allowed the server to circumvent punishment from the client by using its metadata to recover its invalid data. With the knowledge that cloud storage providers would not be in business if they maliciously corrupted their clients' data, we resolved this contradiction by modeling the server as the client's ally rather than her adversary. We also introduced the IBLT tree, a more efficient version of the segment tree presented in the original work by Ateniese {\it et al.} \cite{ateniese2017accountable}, decreasing the server's burden in implementing accountability challenges. 

In future work, it would be beneficial to exploring how the Dynamic Accountable
Storage protocol can be optimized for larger groups of clients.
For example, if a group of users stores similar data,
a technique to share portions of an IBLT tree may further improve space
efficiency.
Additionally, we wish to provide empirical comparisons in proof size and
computation time between this protocol,
the original Accountable Storage protocol,
and other protocols that guarantee some notion of accountability
\cite{erwayDynamicProvableData2008,jin2016dynamic,juels2007pors,shi2013practical,armknecht2014outsourced,catalano2013vector,chen2015verifiable,chen2020publicly}.

\begin{credits}
\subsubsection{\ackname} 
This work was supported in part by NSF grant 2212129.

\subsubsection{\discintname}
The authors have no competing interests. 
\end{credits}
%
%

\bibliographystyle{splncs04}
\bibliography{citations}

\appendix
\section{Proofs for Dynamic Accountable Storage Construction}\label{sec:appendix-proofs}

\begin{proof}[of Theorem~\ref{thm:correctness}]

    Notice that the only non-deterministic process that affects correctness is the IBLT \textsf{Peel} algorithm used in $\mathsf{ClientAudit}$, $\mathsf{ServerAudit}$, and \\$\mathsf{AccountabilityChallenge}$. The remaining subprocesses are either deterministic, such as \textsf{Update} and \textsf{Combine}, or have randomness that affects runtime but not correctness, as in the IBLT tree update functions. Thus, the three statements of Definition~\ref{def:DAS-correctness} rely on the fact that the client's state is always correct given the updates she has issued and that the IBLT peeling process succeeds with high probability given that there are no more than $\delta$ items in the IBLT.

    It is trivially true that the client updates her state correctly in $\mathsf{Put}$. For $\mathsf{Delete}$, the client must rely on the server to send her the correct block $b$ and its key $k$ using $\mathsf{Get}$. Because the server is honest, it can use the following equation to determine if it has the correct block:
    \[
    T_k^e \bmod N \not= \left( h(k) g^b\right) \bmod N.
    \]

    If $b$ is corrupted and, as we have assumed, the server has corrupted no more than $\delta$ blocks, then as described in the protocol it can invoke \textsf{ServerAudit}. In this case, as shown in Lemma~\ref{lem:IBLT-peel}, the resulting IBLT produced by the audit can be peeled successfully with probability $1 - O(\delta^{-q})$, where $q$ is the number of hash functions used in the IBLT. Even if the peeling process fails, the server will report that to the client. Thus, we have shown that \textsf{Put}, \textsf{Get}, and \textsf{Delete} run successfully with high probability, and even upon failure never cause the client's state to be updated incorrectly.

    As for showing that the algorithms succeed with high probability, notice that we have already proved this for $\mathsf{ServerAudit}$ by citing Lemma~\ref{lem:IBLT-peel}. The remaining two auditing functions $\mathsf{ClientAudit}$ and $\mathsf{AccountabilityChallenge}$ have a similar structure to $\mathsf{ServerAudit}$ and proving that they succeed under the assumptions of Definition~\ref{def:DAS-correctness} is symmetrical. 
\qed
\end{proof}

For the following proofs, we assume that the base of the IBLT tree is the protected Cartesian tree of Bender {\it et al.} \cite{bender} described in Section~\ref{sec:iblt-tree}.

\begin{proof} [of Lemma~\ref{lem:space}]
    The client stores her IBLT $\mathbf{T}_B$, which is of size $O(\delta)$. We do not count her keys $K$ towards her space usage because in any functional outsourced storage system, she would need some way to access her data.

    Along with the client's $n$ blocks, the server stores an IBLT tree. We observed in Section~\ref{sec:protected-cartesian} that there are $O(n/\beta)$ nodes in the protected Cartesian tree. The server must store an IBLT of size $O(\delta)$ in each node, so space usage of the IBLT tree is $O(\delta n / \beta)$. Thus total space usage is $O(n + \delta n / \beta)$.
\qed
\end{proof}

\begin{proof}[of Lemma~\ref{lem:setup}]
Generating $n$ tags and inserting $n$ items into an IBLT using \textsf{Update} takes $O(n)$ time at the client. 

Upon receiving the client's keys, blocks, and tags, the server runs \textsf{InitIbltTree}, which takes $O(n)$ time according to Corollary~\ref{cor:InitIbltTree-runtime}. 
\qed
\end{proof}

\begin{proof} [of Lemma~\ref{lem:insert-remove}]
    For $\mathsf{Put}$, the client simply has to create a tag and run \textsf{Update} once on their IBLT with their new key and block. This takes constant time. Likewise, in $\mathsf{Get}$, the client simply has to send the request and receive the block, which takes constant time. For $\mathsf{Delete}$, the client waits for the server to return their block. If the server does, then the client recreates the tag to verify and then runs \textsf{Update} to peel it from her IBLT. If the server returns \texttt{failure}, then the client immediately returns. Both these situations take constant time at the client. 

    For the server, $\mathsf{Put}$ has the server invoke \textsf{InsertIbltTree}. According to Lemma~\ref{lem:InsertIbltTree-runtime}, this takes $O(\delta\log(n/\beta)/\beta)$ time.

    In $\mathsf{Get}$, there are two cases for the server. Either the server has the block, in which case they simply verify it and return it to the client in constant time. If the server suspects the block is corrupted, it first audits itself. Because we are auditing a single key, we get better bounds. Running \textsf{ConstructIBLT} with $M=1$ only takes $O(\delta \log(n/\beta) + \delta\beta)$ according to Corollary~\ref{cor:construct-iblt-protected-cartesian}. Similarly, running \textsf{Peel} with a single element in the IBLT just requires the server to scan through the IBLT once, taking $O(\delta)$ time. Thus, the worst case time for the server in \textsf{Get} is $O(\delta \log(n/\beta) + \delta\beta)$.

    In \textsf{Delete}, the server executes their response to \textsf{Get} as a subprotocol followed by invoking \textsf{DeleteIbltTree}. According to Lemma~\ref{lem:InsertIbltTree-runtime}, the deletion takes $O(\delta\log(n/\beta)/\beta)$ time. Thus, total runtime for \textsf{Delete} is $O(\delta \log(n/\beta) + \delta\beta)$.
\qed
\end{proof}

\begin{proof}[of Lemma~\ref{lem:audit}]

In $\mathsf{ClientAudit}$ and $\mathsf{AccountabilityChallenge}$, the server first runs \textsf{ConstructIBLT} to build the proof $\mathbf{T}_K$. According to Corollary~\ref{cor:construct-iblt-protected-cartesian}, this takes $O(\delta^2\log(n/\beta) + \delta\beta)$ time when $M$ is set to $\delta$. The proof is an IBLT of size $O(\delta)$, so the proof size must be $O(\delta)$. 

After receiving $\mathbf{T}_K$, the client performs \textsf{Combine} on $\mathbf{T}_B$ and $\mathbf{T}_K$, which involves her iterating through both the IBLTs and combining their entries in $O(\delta)$ time. Lastly, she performs \textsf{Peel} on the recovered IBLT $\mathbf{T}_L$. According to Eppstein {\it et al.} \cite{eppstein_whats_2011}, the peeling process for an IBLT of size $O(\delta)$ takes $O(\delta\log n)$ time. Thus, we get a total client time of $O(\delta\log n)$.

In $\mathsf{ServerAudit}$, the server performs both the server's and client's job in $\mathsf{ClientAudit}$, thus we combine the runtimes to get $O(\delta^2\log(n/\beta) + \delta\beta + \delta\log n)$. There is no communication or client work in $\mathsf{ServerAudit}$. 
\qed
\end{proof}

\section{IBLT Tree Construction} \label{sec:iblt-tree-construction}
\newcommand{\dtree}[1]{\overline{\mathcal{T}}_{#1}}

Suppose we have an implementation of a dynamic search tree with
\textsf{Init}$(S),$ \textsf{Insert}$(k, v, \dtree{S}),$
and \textsf{Delete}$(k, \dtree{S}),$
where $\dtree{S}$ is the dynamic search tree on set $S.$
Each of the IBLT tree operations can be implemented using these operations
as subroutines.

The \textsf{InitIbltTree} operation constructs $\tree{S}$ by first generating
$\dtree{S}$ with \textsf{Init}, then performing the following recursive
procedure to construct each node's IBLT, starting from the root node.
Given node $u,$ if $u$ is a leaf node, then the IBLT of $u$ is constructed by
inserting all $\Theta(\beta)$ blocks stored in $u.$
Otherwise, recursively construct the IBLTs of the children of $u$, then merge
all the IBLTs together and insert the block stored at $u.$

\begin{lemma}
  \label{lem:init-iblt-tree}
  Given $\dtree{S}$ with an implementation of \textsf{Init} which takes
  $T_\textsf{Init}$ time, the \textsf{InitIbltTree} operation takes
  $O(T_{\textsf{Init}} + L \beta + I c \delta)$ time,
  where $L$ and $I$ are the number of leaf and internal nodes respectively,
  and $c$ is the number of children per internal node.
\end{lemma}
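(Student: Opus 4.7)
The plan is to charge the total running time of \textsf{InitIbltTree} to three disjoint sources of work: the initial construction of the underlying dynamic search tree, the per-leaf IBLT populations, and the per-internal-node IBLT merges.

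First I would observe that the algorithm begins by invoking \textsf{Init}$(S)$ on the underlying dynamic search tree to produce $\dtree{S}$; by assumption this contributes exactly $T_\textsf{Init}$ time, accounting for the first term of the bound. After this, the algorithm does not touch $\dtree{S}$ except to traverse it, so the remaining work can be analyzed purely in terms of the IBLT operations performed at each node.

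Next I would bound the leaf-node work. Each leaf stores $\Theta(\beta)$ key-value-tag triples, and the IBLT associated with that leaf is built by calling \textsf{Update} once per triple. From the pseudocode in Section~\ref{sec:IBLT-def}, each \textsf{Update} touches a constant number ($q$) of cells and so costs $O(1)$ time, giving $O(\beta)$ time per leaf and $O(L\beta)$ time summed across all $L$ leaves. For the internal nodes, processing node $u$ requires combining the $c$ IBLTs of its children (with one additional \textsf{Update} if a block is stored at $u$ itself, which is an $O(1)$ additive cost). Each IBLT has $\Theta(\delta)$ cells, so by the \textsf{Combine} pseudocode each pairwise merge costs $O(\delta)$ and combining all $c$ children together costs $O(c\delta)$. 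Summed over all $I$ internal nodes, this contributes $O(Ic\delta)$.

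Adding the three contributions yields the claimed $O(T_\textsf{Init} + L\beta + Ic\delta)$ bound. I do not expect any real obstacle here: the argument is essentially a direct accounting of the recursive construction, and the only subtlety is noting that the per-cell costs of \textsf{Update} and \textsf{Combine} are both constant, so that leaf costs scale with $\beta$ (block count) while internal costs scale with $\delta$ (IBLT size) times the fan-out $c$.
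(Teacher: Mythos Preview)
Your proposal is correct and follows essentially the same approach as the paper: charge the cost to the underlying \textsf{Init}, the $O(\beta)$ IBLT construction at each of the $L$ leaves, and the $O(c\delta)$ merge at each of the $I$ internal nodes. Your write-up is in fact more detailed than the paper's own proof, which simply states these per-node costs without justifying them via the \textsf{Update} and \textsf{Combine} pseudocode.
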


\begin{proof}
  After $\dtree{S}$ is initialized, each leaf node must have its
  IBLT constructed, which takes $O(\beta)$ time per leaf.
  Then for each internal node, the cost of merging two IBLTs is $O(\delta)$
  time.
  Therefore merging across $c$ children takes $O(c \delta)$ time.
\qed
\end{proof}

Given a key $k$ and value $v$ to insert into $\tree{S},$
\textsf{InsertIbltTree} first performs \textsf{Insert} on $\dtree{S}.$
During the insertion, we track each node whose subtree changes such that their IBLT must change.
For each of these nodes, its IBLT is updated by merging its children,
similar to the \textsf{InitIbltTree} operation for $\tree{S}.$

\begin{lemma}
  \label{lem:insert-iblt-tree}
  Given $\dtree{S}$ with an implementation of \textsf{Insert} which takes
  $T_{\textsf{Insert}}$ time, the \textsf{InsertIbltTree} operation
  takes $O(T_{\textsf{Insert}} + \Delta_L \beta + \Delta_I c \delta)$ time,
  where $\Delta_L$ and $\Delta_I$ are the number of leaf and internal nodes
  which have different blocks prior to the insertion respectively,
  and $c$ is the number of children per internal node.
\end{lemma}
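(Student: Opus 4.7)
The plan is to mirror the structure of the proof of Lemma~\ref{lem:init-iblt-tree}, but localized to the nodes whose subtrees actually change. First I would invoke the underlying dynamic search tree's \textsf{Insert} operation, charging $T_{\textsf{Insert}}$ time, and as a side effect of that call collect the identities of the leaves and internal nodes of $\dtree{S}$ whose subtree contents have changed. By definition these sets have sizes $\Delta_L$ and $\Delta_I$, respectively. Assuming (as is standard for search-tree implementations of \textsf{Insert}) that this bookkeeping can be folded into the same traversal that performs the structural modifications, it adds no asymptotic overhead beyond $T_{\textsf{Insert}}$.

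Next I would rebuild the affected IBLTs in a bottom-up order, exactly as in \textsf{InitIbltTree} but restricted to the changed portion of the tree. For each of the $\Delta_L$ changed leaves, I rebuild its IBLT from scratch by inserting its (up to $\Theta(\beta)$) stored key-value-tag triples, at cost $O(\beta)$ per leaf, for a total of $O(\Delta_L \beta)$. For each of the $\Delta_I$ changed internal nodes, I recompute its IBLT by applying \textsf{Combine} across the IBLTs of its $c$ children (each of size $O(\delta)$) and, if relevant, adding the single block stored at that node; since each pairwise \textsf{Combine} takes $O(\delta)$ time, an internal node costs $O(c\delta)$, and the total cost across all changed internal nodes is $O(\Delta_I c \delta)$. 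Summing the three contributions yields the claimed bound of $O(T_{\textsf{Insert}} + \Delta_L \beta + \Delta_I c \delta)$.

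The main obstacle I anticipate is the first step, namely justifying that the set of nodes whose IBLT must be rebuilt is captured exactly by $\Delta_L$ and $\Delta_I$ and can be enumerated for free during the underlying \textsf{Insert}. A subtlety is that structural modifications such as rotations (in a balanced BST) or re-pivoting (in a protected Cartesian tree) can shuffle subtree membership at nodes that are not on the insertion path; the parameters $\Delta_L$ and $\Delta_I$ are defined precisely to absorb these cases, so the bound is really a statement about how IBLT maintenance piggybacks on the search tree's own accounting of its structural changes. Once that is clearly stated, the arithmetic above closes the proof, and the symmetric argument (with \textsf{Delete} replacing \textsf{Insert}) immediately yields the analogous bound for \textsf{DeleteIbltTree}, completing the time-bound portion of Theorem~\ref{thm:iblt-tree-cost}.
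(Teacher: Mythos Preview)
Your proposal is correct and matches the paper's approach exactly: the paper simply states that the lemma ``follows from the same reasoning as Lemma~\ref{lem:init-iblt-tree},'' i.e., charge $T_{\textsf{Insert}}$ for the underlying insert, then $O(\beta)$ per changed leaf and $O(c\delta)$ per changed internal node for rebuilding their IBLTs. Your write-up is in fact more detailed than the paper's one-line justification, and your remark about the symmetric argument for \textsf{DeleteIbltTree} is precisely how the paper handles Lemma~\ref{lem:delete-iblt-tree} as well.
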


The previous lemma follows from the same reasoning as
Lemma~\ref{lem:init-iblt-tree}.
Note the \textsf{InsertIbltTree} operation allows the \textsf{Insert} operation
to handle the actual insertion of the block, and the IBLT tree simply updates
the IBLTs of the changed nodes accordingly.
Naturally the \textsf{DeleteIbltTree} operation works in the same way.

\begin{lemma}
  \label{lem:delete-iblt-tree}
  Given $\dtree{S}$ with an implementation of \textsf{Delete} which takes
  $T_{\textsf{Delete}}$ time, the \textsf{DeleteIbltTree} operation
  takes $O(T_{\textsf{Delete}} + \Delta_L \beta + \Delta_I c \delta)$ time,
  where $\Delta_L$ and $\Delta_I$ are the number of leaf and internal nodes
  which have different blocks prior to the deletion respectively,
  and $c$ is the number of children per internal node.
\end{lemma}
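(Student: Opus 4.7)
The plan is to mirror the proof of Lemma~\ref{lem:insert-iblt-tree} essentially verbatim, since deletion in the IBLT tree differs from insertion only in the sign of the structural change, and IBLT cell operations are built on the self-cancelling XOR, which makes insertions and deletions indistinguishable at the level of IBLT updates. Concretely, I would first observe that \textsf{DeleteIbltTree} delegates the actual removal of the block $(k, v, t)$ to the underlying dynamic search tree via \textsf{Delete}$(k, \dtree{S})$, which by hypothesis runs in $T_{\textsf{Delete}}$ time. During this call we record the set of nodes whose stored multiset of blocks is altered: these are precisely the $\Delta_L$ leaf nodes and $\Delta_I$ internal nodes referenced in the statement.

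Next I would argue the cost of repairing the IBLTs at each changed node. For each of the $\Delta_L$ changed leaf nodes, we rebuild its IBLT from scratch over its $\Theta(\beta)$ resident blocks by $\Theta(\beta)$ calls to \textsf{Update}, each of which runs in $O(1)$ time (for fixed $q$), yielding $O(\Delta_L \beta)$ in total. For each of the $\Delta_I$ changed internal nodes, we recompute its IBLT by calling \textsf{Combine} across its $c$ children's IBLTs; since \textsf{Combine} of two IBLTs of size $O(\delta)$ runs in $O(\delta)$ time, combining $c$ of them costs $O(c\delta)$, giving $O(\Delta_I c \delta)$ in total. Adding the cost of \textsf{Delete} itself yields the claimed bound of $O(T_{\textsf{Delete}} + \Delta_L \beta + \Delta_I c \delta)$.

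I do not anticipate a real obstacle here: the lemma is formally the dual of Lemma~\ref{lem:insert-iblt-tree}, and the \emph{only} potential subtlety is that deletion may trigger structural changes (such as leaf merges, rebalancing, or promotion of a replacement pivot in a Cartesian-tree style implementation) that are different from those triggered by insertion. However, the lemma statement absorbs all such effects into $T_{\textsf{Delete}}$, $\Delta_L$, and $\Delta_I$, which are defined with respect to the actual behaviour of the underlying \textsf{Delete} routine. Since we rebuild leaf IBLTs from scratch and recompute internal IBLTs by combining children (rather than attempting an incremental XOR-based patch), correctness is automatic and no invariant about pre-existing contents needs to be re-established. Thus the proof is essentially a one-line appeal to the same accounting used in Lemmas~\ref{lem:init-iblt-tree} and~\ref{lem:insert-iblt-tree}, closing the analysis of the IBLT tree framework.
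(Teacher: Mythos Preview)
Your proposal is correct and matches the paper's approach exactly: the paper does not even give a separate proof for this lemma, stating only that ``the \textsf{DeleteIbltTree} operation works in the same way'' as \textsf{InsertIbltTree}, which in turn is justified by the same accounting as Lemma~\ref{lem:init-iblt-tree}. Your write-up is in fact more detailed than the paper's, but the underlying argument---delegate to \textsf{Delete}, then rebuild the $\Delta_L$ leaf IBLTs in $O(\beta)$ each and the $\Delta_I$ internal IBLTs by $c$-way \textsf{Combine} in $O(c\delta)$ each---is identical.
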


The \textsf{ConstructIBLT} operation constructs the IBLT of all blocks in
$\tree{S}$ excluding the set of keys $\{k^\prime_1, \dots, k^\prime_M\}$, through the
following recursive process.
Starting from the root node, check if the node $u$ contains any of the blocks
to exclude.
If it does not, then we can return the IBLT of the node.
If it does, and $u$ is a leaf node, then we construct the IBLT of the node
excluding any blocks we desire.
Otherwise, we recurse into each child of $u$, passing a segment of the
desired excluded blocks to each appropriate child.

\begin{lemma}
  \label{lem:construct-iblt}
  Given $\dtree{S},$ and a set of keys to exclude from the IBLT
  $\{k^\prime_1, \dots, k^\prime_M\},$ the \textsf{ConstructIBLT} operation takes
  $O(H M \delta + \delta \beta),$ where $H$ is the height of $\dtree{S}.$
\end{lemma}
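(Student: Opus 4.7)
The plan is to bound the total work of \textsf{ConstructIBLT} by analyzing which nodes of $\tree{S}$ are visited by the recursion, and by charging a per-node cost to each one. First I would observe that the recursion descends into a node $u$ only when $u$'s subtree contains at least one of the excluded keys $k^\prime_1, \dots, k^\prime_M$; otherwise the procedure immediately returns the precomputed IBLT stored at $u$ without touching any of its descendants. Thus the set of visited nodes is exactly the union, over $i=1,\dots,M$, of the root-to-leaf paths ending at the unique leaf containing $k^\prime_i$ (if it exists), together with the siblings along those paths whose IBLTs are read but not descended into.

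Next I would bound the per-node cost. At each visited internal node, \textsf{ConstructIBLT} obtains an IBLT from each of its $c$ children and combines them using \textsf{Combine}; since every IBLT has $\Theta(\delta)$ cells, this combine step costs $O(c\delta)$ time, and since we are instantiating with a binary tree ($c=O(1)$) this is $O(\delta)$. At each visited leaf, the procedure constructs a fresh IBLT on its $\Theta(\beta)$ stored blocks (minus the excluded ones), which costs $O(\beta)$ time using the constant-work \textsf{Update} operation per block on an IBLT of size $O(\delta)$.

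Finally I would count the visited nodes. Each excluded key lies in at most one leaf, so the number of visited leaves is at most $M$. The visited internal nodes are exactly the proper ancestors of those leaves, and since each root-to-leaf path has length at most $H$, the total count is $O(HM)$. Summing gives internal-node work $O(HM\delta)$ and leaf work $O(M\beta)$; using the standing assumption $M \le \delta$ from the accountable storage setting (the peeling process can only recover up to $\delta$ discrepancies, so there is no reason to call \textsf{ConstructIBLT} with a larger exclusion set), we have $M\beta \le \delta\beta$, yielding the claimed $O(HM\delta + \delta\beta)$ bound.

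The main subtlety, rather than an obstacle, is the counting argument in the third step: I need to verify that the recursion never visits a node whose subtree contains no excluded key (so that the "visited set" really equals the union of these paths) and that siblings whose precomputed IBLT is simply read off incur no extra charge beyond the $O(\delta)$ already billed to their parent's \textsf{Combine}. Both follow directly from the structure of the procedure described in the overview, but they should be stated explicitly so that the $HM$ path-union bound cleanly replaces the naive $HM$-per-key bound without double counting shared ancestors.
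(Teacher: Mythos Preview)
Your proposal is correct and follows essentially the same argument as the paper: both charge $O(\delta)$ per internal node along each of the $M$ root-to-leaf search paths and $O(\beta)$ per visited leaf, then sum. You are in fact more explicit than the paper on the one subtle point, namely that the raw leaf cost is $O(M\beta)$ and one needs the standing assumption $M\le\delta$ (the paper silently jumps from ``$O(H\delta+\beta)$ per excluded block'' to ``$O(HM\delta+\delta\beta)$ in total'').
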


\begin{proof}
  In the worst case, each of the $M$ blocks to exclude are in different
  leaf nodes.
  This would require $O(H)$ time to traverse $\dtree{S}$ to find the leaf.
  Then for each leaf node, we must construct the IBLT of the node excluding
  the desired blocks, which takes $O(\beta)$ time.
  Finally we have to recurse up the path from the leaf node to the root node,
  updating the IBLT of each node, which takes $O(\delta)$ time per node along
  the path.
  This gives $O(H \delta + \beta)$ per excluded block,
  or $O(H M \delta + \delta \beta)$ time in total.
\qed
\end{proof}

\begin{theorem}[Same as Theorem~\ref{thm:iblt-tree-cost}]
  Given a dynamic search tree with $\Theta(\beta)$ elements per leaf node,
  the IBLT tree can be implemented with $O(\delta)$ space overhead. 
  Given an implementation of the \textsf{Init}, \textsf{Insert}, and
  \textsf{Delete}, operations on the dynamic search tree, we can
  implement an IBLT tree with the following time complexities for each
  operation.
  Let
  $L$ and $I$ be the number of leaf and internal nodes respectively,
  $\Delta_L$  and $\Delta_I$ be the number of leaf and internal nodes
  which are changed by an insertion or deletion operation,
  $H$ be the height of the tree,
  $c$ be the number of children per internal node,
  and
  $M$ be the number of blocks to exclude when constructing the full IBLT.
  \begin{enumerate}
    \item The \textsf{InitIbltTree} operation takes 
      $O(T_\textsf{Init} + L\beta + I c \delta)$ time.
    \item The \textsf{InsertIbltTree} and \textsf{DeleteIbltTree} operations 
      take
      $O(T_\textsf{Insert} + \Delta_L \beta + \Delta_I c \delta)$
      and
      $O(T_\textsf{Delete} + \Delta_L \beta + \Delta_I c \delta)$
      time.
    \item The \textsf{ConstructIBLT} operation takes
      $O(H M \delta + \delta \beta)$ time.
  \end{enumerate}
\end{theorem}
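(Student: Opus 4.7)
The plan is to split the theorem into one space claim and three time claims, one per operation, and to prove them all via the same local-repair paradigm: the underlying dynamic search tree does the structural work, and the IBLT-tree layer simply keeps each node's IBLT in sync by rebuilding it either from its $\Theta(\beta)$ elements (at leaves) or by \textsf{Combine}-ing its children's IBLTs (at internal nodes). The space bound is immediate: each node is augmented with a single IBLT of $\Theta(\delta)$ cells and nothing else, so the per-node overhead is $O(\delta)$.

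For \textsf{InitIbltTree}, I would first call \textsf{Init} on $S$ in $T_{\textsf{Init}}$ time, then do a single bottom-up pass. Each leaf's IBLT is built by inserting its $\Theta(\beta)$ triples for $O(\beta)$ per leaf and $O(L\beta)$ total, while each internal node with $c$ children builds its IBLT with $c-1$ \textsf{Combine} calls on IBLTs of size $\Theta(\delta)$, for $O(c\delta)$ per node and $O(Ic\delta)$ total. Summed with the $T_{\textsf{Init}}$ cost, this gives the claimed bound.

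The \textsf{InsertIbltTree} and \textsf{DeleteIbltTree} bounds follow the same pattern, applied locally. After the underlying \textsf{Insert} or \textsf{Delete} completes, we rebuild only the IBLTs of the $\Delta_L$ leaves and $\Delta_I$ internal nodes whose subtree contents have actually changed, paying $O(\beta)$ per affected leaf and $O(c\delta)$ per affected internal node. One implementation detail that would need settling is that the underlying operation must expose the set of nodes whose subtree content changed; for rotation- or restructure-based trees this is natural, and in any case such nodes are visited by the operation anyway, so tracking them adds no asymptotic cost.

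The subtlest part, and what I expect to be the main obstacle, is the analysis of \textsf{ConstructIBLT}. My plan is a recursive top-down traversal that, at a node whose subtree contains none of the $M$ excluded keys, returns the node's precomputed IBLT in $O(\delta)$ time; at an internal node that contains at least one excluded key, recurses only into children whose subtrees contain excluded keys and \textsf{Combine}s the results with the precomputed IBLTs of the other children; and at a leaf containing excluded keys, rebuilds the leaf's IBLT from its $\Theta(\beta)$ triples minus the excluded ones. The key counting step is then to argue that every touched internal node lies on at least one root-to-leaf path to a leaf holding an excluded key; since there are at most $M$ such leaves and each path has length at most $H$, the number of touched internal nodes is $O(HM)$, each contributing $O(\delta)$ for its local \textsf{Combine}. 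Together with at most $M$ rebuilt leaves at $O(\beta)$ apiece, and using $M \le \delta$ to write $M\beta = O(\delta\beta)$, this yields the stated $O(HM\delta + \delta\beta)$ bound.
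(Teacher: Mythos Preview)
Your proposal is correct and follows essentially the same approach as the paper: the paper also decomposes the theorem into separate lemmas for \textsf{InitIbltTree}, \textsf{InsertIbltTree}/\textsf{DeleteIbltTree}, and \textsf{ConstructIBLT}, proving each by exactly the bottom-up rebuild-and-\textsf{Combine} accounting you describe, and handling \textsf{ConstructIBLT} by charging $O(H\delta+\beta)$ per excluded key along its root-to-leaf path. One point where you are actually more careful than the paper: you explicitly invoke $M\le\delta$ to turn the $M\beta$ leaf-rebuild cost into the stated $\delta\beta$ term, whereas the paper silently makes this jump in its summation.
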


\begin{proof}
  The theorem follows from Lemmas~\ref{lem:init-iblt-tree},
  \ref{lem:insert-iblt-tree}, \ref{lem:delete-iblt-tree},
  and
  \ref{lem:construct-iblt}.
\qed
\end{proof}

\subsection {Proof of Protected Cartesian Tree Construction}

The \textsf{InsertIbltTree} and \textsf{DeleteIbltTree} operations on the
the protected Cartesian tree can be implemented in a slightly more efficient
way in order to take advantage of the \textsf{Init} operation being
used as a subroutine in the \textsf{Insert} and \textsf{Delete} operations.
Note that while the \textsf{Insert} operation is the main focus,
the \textsf{Delete} operation is symmetric and can be analyzed in the same way.

The \textsf{Insert} operation on the protected Cartesian tree is implemented by
inserting the element into a node and observing whether the pivot at that node
needs to be updated.
If it does, then the entire subtree rooted at that node is reconstructed with
the \textsf{Init} operation.
Otherwise, the operation proceeds recursively in either the left or right
child if the inserted pair's key is less than or greater than the pivot
respectively.

Since the \textsf{Init} operation is called on at most one node $u$, for all
nodes on the path from the root to $u$ the IBLT of each $u$ can be updated in
$O(1)$ time by performing the \textsf{Update} operation on the IBLT of each
node.
Since this was a part of the cost of the \textsf{Insert} operation in the
protected Cartesian tree, by slightly modifying the analysis of the insertion
operation for the protected Cartesian tree to accomodate the
\textsf{InitIbltTree} operation, we can show Lemma \ref{lem:InsertIbltTree-runtime}.

To prove the lemma, we first introduce the following property of the protected
Cartesian tree, used in the proof.
\begin{lemma}[Adjusted from Bender et al.~\cite{bender}]
  \label{lem:insert-path-iblt-tree}
  Suppose the protected Cartesian tree after the insertion of $(k, v)$
  requires changes in the pivots in the tree.
  Then there is a pair of paths from the root to two leaf nodes which contain all of
  the pivot nodes which need to be updated.
\end{lemma}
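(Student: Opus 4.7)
The plan is to identify precisely which internal nodes can have their pivot change after a single insertion, and argue these lie on at most two root-to-leaf paths. First, I would characterize when the pivot at an internal node $u$ can become invalid after inserting $(k,v)$. Since the pivot of $u$ must be the element of $U(S_u)$ with minimum hash value $h$, an update is forced only if one of the following occurs: (i) the new element itself enters $U(S_u \cup \{(k,v)\})$ and has smaller hash than the current pivot; (ii) the current pivot's rank shifts so that it now belongs to $P(S_u \cup \{(k,v)\})$; or (iii) some other element crosses from $P$ into $U$ and has smaller hash than the current pivot.

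Next, I would observe that a single insertion shifts $\rank_S$ of each existing element by at most one, so in any subtree $S_u$ at most two elements change their protected/unprotected status: one near the left boundary (at rank $\approx \beta/2$) and one near the right boundary (at rank $\approx |S_u| - \beta/2$). Consequently cases (ii) and (iii) at node $u$ can only be triggered by these two specific boundary elements of $S_u$, while case (i) can only occur along the BST path along which $(k,v)$ descends to its leaf.

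Then I would show that as $u$ ranges over the ancestors of the insertion point, the sequence of left-boundary elements of the successive subtrees $S_u$ traces out a single path from the root down to some leaf, and similarly the right-boundary elements trace a second root-to-leaf path. The key BST property to exploit is that when we descend from $u$ to the child $u'$ whose subtree inherits, say, the left-boundary element of $S_u$, that same element remains near the left end of the strictly smaller set $S_{u'}$; inductively these boundary elements stay together as we descend, and the descent terminates at a leaf. Since case (i) nodes lie on the insertion path, which coincides with one of the two boundary chains up to where the new key diverges, the union of all affected nodes is covered by two root-to-leaf paths.

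The main obstacle will be the third step: rigorously establishing that the left- and right-boundary elements across the nested subtrees trace out \emph{coherent} root-to-leaf paths rather than branching at some internal node. This requires carefully tracking how the BST ordering induced by each pivot interacts with the rank thresholds that define $P$ and $U$, and verifying the inductive claim that the relevant boundary element of $S_u$ always propagates to exactly one child $u'$ as a boundary element of $S_{u'}$. Everything else in the proof reduces to bookkeeping once this geometric fact about boundary propagation is established.
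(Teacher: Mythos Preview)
The paper does not prove this lemma; it is stated as ``Adjusted from Bender et al.'' and used as a black box in the proof of Lemma~\ref{lem:InsertIbltTree-runtime}. So there is no in-paper argument to compare against. That said, your plan is considerably more elaborate than what the statement requires, and Step~3 is where it goes off the rails.

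The direct argument is this: in a binary search tree, inserting $(k,v)$ changes the subtree set $S_u$ only for nodes $u$ on the BST search path for $k$; every node off that path has $S_u$ unchanged, hence its protected/unprotected partition and pivot are unchanged. The search path is already a single root-to-leaf path, so the lemma holds (with a path to spare). You implicitly acknowledge this when you write ``as $u$ ranges over the ancestors of the insertion point''---at that moment you are already done, and everything that follows about boundary elements is superfluous.

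Your Steps~1--2 are correct and are exactly the case analysis the paper uses in the proof of the \emph{runtime} lemma (Lemma~\ref{lem:InsertIbltTree-runtime}) to bound the probability that a given node on the path needs rebuilding. But they are not needed for the path-containment statement itself. Step~3, on the other hand, is not well-posed: boundary elements are keys, not tree nodes, so saying they ``trace out a root-to-leaf path'' has no clear meaning, and the inductive claim you flag as the ``main obstacle'' (that the boundary element of $S_u$ propagates as a boundary element of exactly one child $S_{u'}$) is neither obviously true nor relevant. The obstacle is a phantom: once you have restricted to ancestors of the insertion point, there is nothing left to prove.
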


Now we move on to the proof of Lemma \ref{lem:InsertIbltTree-runtime}.

\begin{proof}[of Lemma~\ref{lem:InsertIbltTree-runtime}]
  Let $z$ be the root of the protected Cartesian tree,
  and $S^\prime$ be the set $S \cup \{ (k, v) \}.$
  If $|U(S^\prime)| \leq \beta,$ then the bound holds trivially.
  Otherwise, the probability that the pivot of $z$ needs to be updated is
  $O(1 / n).$
  We first observe that the inserted pair $(k, v)$ either falls into the
  protected or unprotected region of $S$.
  In either region it falls, it must push another pair into the other region,
  i.e. $(k, v)$ falls into the protected region and another pair is pushed
  from the protected to the unprotected region.
  This gives us three cases in which the pivot of $z$ needs to be updated:
  1) $z$ is pushed from the protected to the unprotected region,
  2) the new element pushed into the unprotected region has a smaller $h$ value
     than the pivot of $z,$ or
  3) the new element has a smaller $h$ value than the pivot of $z.$
  All these independent events occur with probability $O(1 / |U(S^\prime)|),$
  and since $|U(S^\prime)| = n - \beta + 1 = \Theta(n),$ by definition of the
  protected region, therefore we get $(1 / n)$ probability that the pivot of
  $z$ needs to be updated.
  Since this would cause the entire tree rooted at $z$ to be reconstructed,
  which is of size $O(n / \beta),$ the expected time contribution from $z$
  is $O(\delta / \beta).$
  Lemma~\ref{lem:insert-path-iblt-tree} also states the number of nodes changed
  by the insertion can be found in at most two root to leaf paths.
  Using the same logic for $z$ for the other nodes along these two paths,
  the expected contribution of each of these nodes is $O(\delta / \beta).$
  Finally, since we know the number of nodes along these paths is $O(\log(n /
  \beta)),$ the expected time to perform the \textsf{InsertIbltTree} operation
  is $O(\delta \log(n / \beta) / \beta).$
\qed
\end{proof}

\end{document}